\newcommand{\epsi}{\varepsilon}
\newcommand{\E}{{\mathrm{e}}}
\newcommand{\I}{\mathrm{i}}
 \newcommand{\R}{ \mathbb{R} }
\newcommand{\C}{ \mathbb{C} }
\newcommand{\N}{ \mathbb{N} }
\newcommand{\Z}{ \mathbb{Z} }
\newcommand{\D}{\mathrm{d}}
\newcommand{\Or}{{\mathcal{O}}}
\newcommand{\tr}{{\mathrm{tr}}}
\newcommand{\Hi}{{\mathfrak{H}}}
\newcommand{\bm}{\begin{pmatrix}}
\newcommand{\Em}{\end{pmatrix}}
\newcommand{\add}{\mathrm{ad}}
\newcommand{\ad}{\mathscr{L}}
\newcommand{\dege}{\kappa}
\newcommand{\loc}{{\rm Loc}}
\newcommand{\rom}{\renewcommand{\labelenumi}{{\rm(\roman{enumi})}}}
\newcommand{\alp}{\renewcommand{\labelenumi}{{\rm(\alph{enumi})}}}
\newcommand{\balp}{\begin{enumerate}\alp}
\newcommand{\ealp}{\end{enumerate}}
\newcommand{\brom}{\begin{enumerate}\rom}
\newcommand{\erom}{\end{enumerate}}
 \newcommand{\lminus}{%
  \mathrel{\vbox{\offinterlineskip\ialign{%
    \hfil##\hfil\cr
    $\scriptscriptstyle\Lambda$\cr
    \noalign{\kern-0.2ex}
    $-$\cr
}}}}
\newcommand{\Mminus}{%
  \mathrel{\vbox{\offinterlineskip\ialign{%
    \hfil##\hfil\cr
    $\scriptscriptstyle{M}$\cr
    \noalign{\kern-0.2ex}
    $-$\cr
}}}}
\newcommand{\SVP}{\mathcal{V}}
\newcommand{\dimfib}{s}
\newcommand{\timedom}{I}
   \newtheorem{theorem}{Theorem}[section]
  \newtheorem{definition}{Definition}[section]
\newtheorem*{theorem*}{Theorem}
\newtheorem{lemma}{Lemma}[section]
\newtheorem*{lemma*}{Lemma}
\newtheorem{proposition}{Proposition}[section]
\newtheorem{corollary}{Corollary}[section]
\newtheorem{remark}{Remark}[section]
\title{Non-equilibrium almost-stationary states  and\\ linear response for   gapped quantum systems}
\author{ Stefan Teufel\thanks{
Fachbereich Mathematik, Eberhard-Karls-Universit\"at\newline
\textcolor{white}{a} \hspace{.7em} Auf der Morgenstelle 10, 72076 T\"ubingen, Germany\newline
\textcolor{white}{a} \hspace{.7em} E-mail:   stefan.teufel@uni-tuebingen.de}}
\begin{document}
\maketitle

\begin{abstract} 
We prove the validity of linear  response theory at zero temperature for  perturbations of  gapped Hamiltonians describing interacting fermions on a lattice. As an essential innovation, our result requires  
the spectral gap assumption only for the unperturbed Hamiltonian and applies to a large class of  perturbations that close the spectral gap. Moreover, we prove formulas also for higher order response coefficients.

Our justification of linear response theory  is based on a novel
extension of the adiabatic theorem to  situations where a time-dependent  perturbation closes the gap. 
According to the standard version of the adiabatic  theorem, when the perturbation is switched on adiabatically and as long as the gap does not close, the initial ground state evolves into the ground state of the perturbed operator.
The new adiabatic theorem states that for perturbations that are either slowly varying potentials or small quasi-local operators,   once the perturbation closes the gap, the adiabatic evolution follows  \emph{non-equilibrium almost-stationary states} 
(NEASS) that we construct explicitly.

\medskip

\noindent \textbf{Keywords.} Linear response theory, adiabatic theorem, non-equilibrium stationary state, space-adiabatic perturbation theory, Kubo formula.

\medskip

\noindent \textbf{AMS 
Mathematics 
Subject 
Classification (2010).} 81Q15; 
81Q20; 81V70.
\end{abstract}

%
%

 \section{Introduction}
The simplicity and the empirical success of  linear response theory \cite{Ku} make it a formalism widely used in physics to calculate the response of systems in thermal equilibrium to external perturbations. 
However, its validity for extended systems is based on properties of the microscopic dynamics, which are often difficult  to establish. 
It is therefore not surprising that the rigorous justification of linear response theory based on first principles in specific models is a constant theme in mathematical physics, which was prominently advertised for example by Simon \cite{S} already in 1984.

In this work we prove the validity of linear and also higher order response theory for perturbations of gapped interacting  quantum Hamiltonians on the lattice and  at zero temperature. This  framework is relevant, for example, for (topological) insulators in solid state physics such as quantum Hall systems.

More specifically, we consider a family of quasi-local Hamiltonians for  systems of interacting fermions on finite cubes $\Lambda\subset \Z^d$ with a spectral gap above the ground state, whose size is bounded below by a positive constant  uniformly in the volume $|\Lambda|$. 
Then, according to equilibrium statistical mechanics, the equilibrium state of the system at sufficiently low temperature is very close to its ground state.
A question of fundamental physical importance is to understand the ``response'' of such systems to    static perturbations as, for example, a weak external electric field. Here ``response'' refers to the change of expectation values of physical quantities which are induced by adiabatically switching  on the perturbation. Linear response theory proceeds by applying first order time-dependent perturbation theory in an uncontrolled way, cf.\ Section~\ref{sec:LinearResponse}.
Thus, any justification of linear response theory in the present context starts necessarily from the analysis of solutions of the time-dependent Schr\"odinger equation in the adiabatic limit. 
However, the standard adiabatic theorem, which provides asymptotic expansions of these solutions to any order in the adiabatic parameter, falls short for extended interacting systems for two 
reasons. Firstly, it yields norm-estimates that  are not and cannot be uniform in the volume $|\Lambda|$. Secondly, it must be assumed that a spectral gap    remains open uniformly in the volume $|\Lambda|$ even if the perturbation is fully turned on.

Recently, Bachmann et al.\ \cite{BDF} were able to prove an adiabatic theorem for expectation values of local observables in interacting spin systems with error estimates that are uniform in the volume $|\Lambda|$. Their result  was slightly extended and translated   to the setting of lattice fermions in \cite{MT}.
While the result of Bachmann et al.\ is a technical and conceptual breakthrough, it is still an extremely difficult open problem to prove their main assumption, namely the stability of the gap of a generic gapped many-body Hamiltonian under small perturbations. 
More importantly, the linear response formalism is  expected to be applicable also in situations 
where the  perturbation  closes the spectral gap and the system is driven into an (almost) stationary state that need not be an eigenstate.

In this article we formulate and prove a novel adiabatic theorem with a gap assumption only on the   unperturbed Hamiltonian. The class of allowed perturbations contains slowly varying but not necessarily small potentials and small quasi-local operators. 
It is shown that, once the spectral gap closes, the adiabatic evolution no longer  follows the ground state of the system---which is an invariant state for the instantaneous Hamiltonian---but instead  a certain almost-invariant state for the now gapless    Hamiltonian. As these almost-invariant states are neither eigenstates nor functions of the Hamiltonian, we call them {\em non-equilibrium almost-stationary states} (NEASS). Rigorous and uniform asymptotic expansions of the NEASS, into which the system evolves when adiabatically turning on a perturbation, then allow for a straightforward proof of linear and higher order response theory.

Since the formulation of precise statements requires the implementation of a decent amount of not completely standard mathematical concepts, we refrain from stating theorems in the introduction and instead briefly explain the main conceptual ideas behind our proof. 
As realised and  worked out in \cite{BDF}, the key concepts for adiabatic approximations that hold uniformly in the volume are locality and finite speed of propagation in lattice systems. By assumption, all operators appearing (the Hamiltonian, the perturbation, the observables) are quasi-local, i.e.\ they are sums of local terms. 
While the number of summands in these operators increases with increasing volume, 
in any bounded region only finitely many summands make a sizeable contribution.
Moreover, thanks to Lieb-Robinson bounds \cite{LR} no long range correlations are induced by the dynamics and, as a consequence, the 
spectral flow is generated by a quasi-local operator \cite{HW,BMNS}. This allowed Bachmann et al.\ \cite{BDF} to control adiabatic approximation errors for expectations of   local observables uniformly in the volume.

To dispose of the gap assumption   for the perturbed Hamiltonian, in the present article we use  that   small quasi-local perturbations and  slowly varying potentials both leave intact a local gap structure, even though the full perturbed Hamiltonian might not be gapped anymore. For small quasi-local perturbations this is expected, because for any fixed volume stability of the gap follows from standard perturbation theory. Slowly varying potentials, on the other hand, are locally almost constant and thus only shift the local terms in the Hamiltonian by a multiple of the identity. As a consequence, the  NEASS of the perturbed system can be constructed by applying a unitary transformation that is generated by a sum of local terms to the ground state of the unperturbed Hamiltonian.  The resulting state is almost-invariant under the dynamics generated by the perturbed Hamiltonian, because  transitions out of this state either require particles to overcome the local energy gap or to tunnel long spatial distances. 
For a heuristic sketch of the situation see Figure~\ref{fig1}. To implement these ideas mathematically, we heavily  use   and partly extend  a technical machinery  that has experienced important new  developments during recent years.
This includes Lieb-Robinson bounds for interacting fermions \cite{NSY,BD}, the quasi-local inverse of the Liouvillian introduced in the context of the so called spectral-flow or quasi-adiabatic evolution \cite{HW,BMNS}, 
as well as ideas and technical lemmas from \cite{BDF} and \cite{MT}. 
\begin{figure}[h]
\begin{center}
\includegraphics[width=\textwidth]{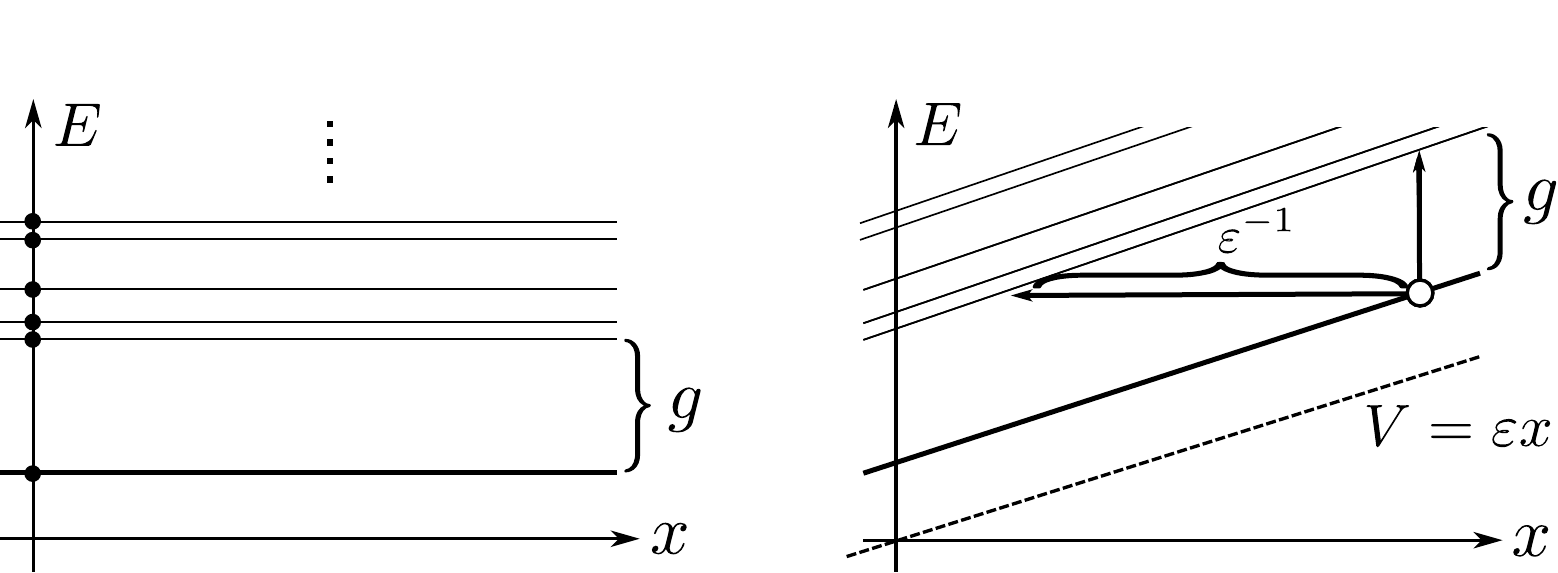}
\end{center}
\caption{On the left panel the eigenvalues of a gapped Hamiltonian $H_0$ are sketched: The ground state is separated by a gap $g$ from the rest of the spectrum uniformly in the volume, while the eigenvalue spacing between the other eigenvalues typically goes to zero when the volume grows. On the right panel the local energy landscape of $H=H_0+V$ with $V=\epsi x$, $\epsi\ll 1$, is sketched and the lowest solid  line represents the NEASS. Local transitions out of the NEASS are strongly suppressed, since either the spectral gap of size $g$ (vertical arrow) or a spatial distance of size $\epsi^{-1}$ (horizontal arrow) needs to be overcome. \label{fig1}}
\end{figure}
The mathematical justification of linear response theory in similar situations for non-interacting systems of fermions was studied e.g.\ in 
\cite{BES,BGKS,DL,ES,KLM}. Note that while the results in \cite{BGKS,KLM,DL} require instead of a spectral gap only a mobility gap, because of the order of limits they are not yet fully satisfactory, cf.\ Section~\ref{sec:LinearResponse}. Clearly linear response in the presence of localisation is of great physical relevance also in the interacting case, however,   the mathematical understanding of many-body localisation is still in its infancy and we are not aware of any work on justifying linear response in this setting.

For non-interacting systems NEASS were constructed e.g.\ in \cite{PST1,PST2,PST3,T}
using  the formalism of space-adiabatic perturbation theory.
In this context  a different  terminology was adopted, and instead of almost-stationary states one speaks about almost-invariant subspaces, a notion  going back to \cite{N}.
The results of the present paper could thus be viewed as a generalisation of space-adiabatic perturbation theory to interacting systems. 

The structure of the paper is as follows.  In Section~\ref{sec:frame} we introduce the mathematical   framework for quasi-local  operators and extend it   to partly localised quasi-local operators. While the concept of slowly varying potentials is not new, its definition for systems of varying size  requires some care. Moreover, we formulate a crucial lemma, Lemma~\ref{lemma:Vcomm}, that states that commutators of arbitrary quasi-local operators with slowly varying potentials are small and quasi-local.
In Section~\ref{sec:NEASS}  we state those results concerning the NEASS that are required for the proof of linear response theory in Section~\ref{sec:LinearResponse}.
    Section~\ref{sec:spacetime} contains the general adiabatic theorem, Theorem~\ref{theorem:Adi}, which we call  a space-time adiabatic theorem, since it 
exploits the slow variation of the Hamiltonian both in space and as a function of time. 
Its proof is divided into two parts.
The proof of the space-time adiabatic expansion is the content of Section~\ref{sec:stae}, the proof of the adiabatic theorem itself is given in Section~\ref{sec:proofAdiabatic}. The   asymptotic expansion of the NEASS is stated in Proposition~\ref{proposition:expand} and proved in  Section~\ref{proposition:expand:proof}.
All statements of Section~\ref{sec:NEASS}    follow as corollaries of the more general space-time adiabatic  theory of Section~\ref{sec:spacetime}.
In Appendix~\ref{app:Vlemma} we prove   Lemma~\ref{lemma:Vcomm}  about commutators with  slowly varying potentials, while in the Appendices~\ref{app:tech}   we collect without proofs a number of technical lemmas  from other sources that need to be slightly adapted. Finally, Appendix~\ref{app:qli}  briefly discusses the local inverse of the Liouvillian 
and how to extend its mapping properties    to slowly varying potentials.

\medskip
 
\noindent {\bf Acknowledgements:}   I am grateful to Giovanna Marcelli, Domenico Monaco, and Gianluca Panati for their involvement in a closely related joint project.  
 I would like to thank Horia Cornean, Vojkan Jaksic, J\"urg Fr\"ohlich, and Marcello Porta for  very valuable discussions and  comments.
This work was  supported by the German Science Foundation within the Research Training Group 1838.

\section{The mathematical framework}\label{sec:frame}

In this section we explain the precise  setup necessary to formulate our main results. In a nutshell, we consider systems of interacting fermions on a subset $\Lambda\subset\Z^d$  of linear size $M$ of the $d$-dimensional square lattice $\Z^d$. In some directions $\Lambda$ can be closed in order to allow for cylinder or torus geometries. The Hamiltonian generating the dynamics is a quasi-local operator, that is, roughly speaking,  an extensive  sum of local operators. As we aim at statements that hold uniformly in the system size $M$, we consider actually families of operators indexed by $\Lambda$. 
This requires a certain amount of technical definitions, in particular, one needs norms that control families of quasi-local operators.
Since   the particle number depends on the size of the system, it is most convenient to work on Fock space. Many of the following concepts are standard and only slightly adapted  from \cite{BMNS,NSY,BDF}.

\subsection{The lattice and the Hilbert space}

   \begin{minipage}{7cm}  
\noindent Let $\Gamma = \Z^d$ be the infinite square lattice and  
$\Lambda =\Lambda(M) := \{-\frac{M}{2}+1,\ldots, \frac{M}{2}\}^d\subset \Gamma$  
the centred box of size $M$, with $M\in 2\N$.  
For many applications, in particular those   concerning currents, it is useful to consider $\Lambda$ being closed in some directions, say  in the  first $d_{\rm c}$  directions, $0\leq d_{\rm c}\leq d$. In particular, for $d_{\rm c}=d$ this means   that $\Lambda$ has a discrete torus geometry and for $d_{\rm c}=0$ it is a $d$-dimensional discrete cube. In order to define the corresponding metric on $\Lambda$, let $a\Mminus b$ be the representative of $[a-b]\in\Z/M$ in $\{-\frac{M}{2}+1,\ldots, \frac{M}{2}\}$ and define
 \end{minipage} \hspace{10mm}  
 \begin{minipage}{4cm}
 \hspace{5mm}
\includegraphics[height=3.5cm]{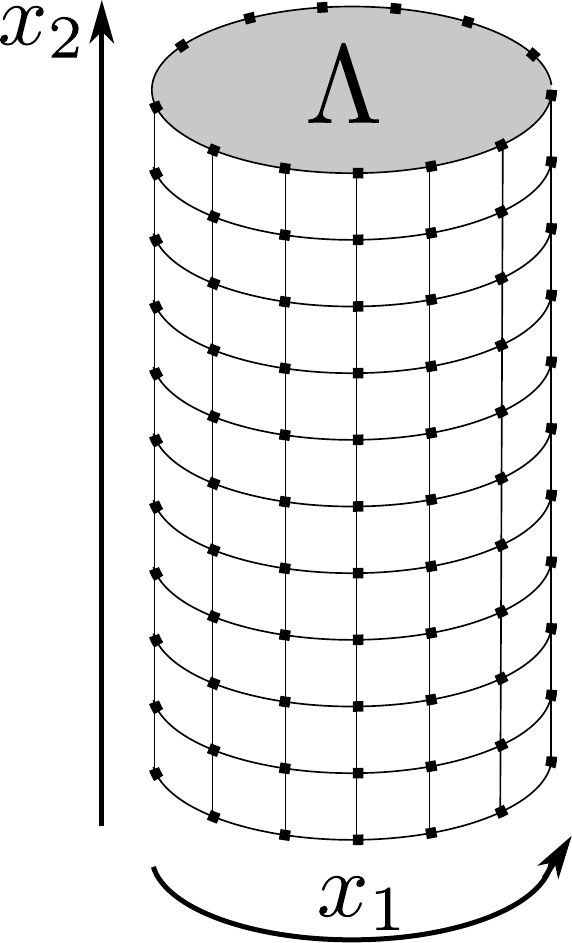} 
\vspace{2mm}
 
 \small Figure 2.1: Here $d=2$, and $\Lambda$ is closed in the $1$-direction and open in the $2$-direction.
\end{minipage}
\[
\lminus : \Lambda\times \Lambda \to\Gamma \,,\quad (x,y) \mapsto x\lminus y \quad\mbox{with}\quad (x\lminus y)_j =\left\{\begin{array}{cl}
x_j \Mminus y_j  & \mbox{ if $j\leq d_{\rm c}$}\\
x_j - y_j & \mbox{ if $j>d_{\rm c}$}\,,
\end{array}\right.
\]
 the difference vector of two points in $\Lambda$. With $d:\Gamma\times\Gamma \to \N_0$ denoting  the $\ell^1$-distance on $\Gamma$,
  the $\ell^1$-distance on the ``cylinder'' $\Lambda$ with the first $d_{\rm c}$ directions closed is
\[
d^\Lambda: \Lambda\times \Lambda \to \N_0\,,\quad d^\Lambda(x,y) := d( 0, y\lminus x)\,.
\]

Let the one-particle Hilbert space be $\mathfrak{h}_\Lambda := \ell^2(\Lambda, \C^\dimfib)$, 
$\dimfib\in\N$, where $\C^\dimfib$ describes   spin and the internal structure of the unit cell. 
The corresponding $N$-particle Hilbert space is its $N$-fold anti-symmetric tensor product   $\Hi_{\Lambda,N} := \bigwedge_{j=1}^N \mathfrak{h}_\Lambda$, 
and the fermionic Fock space is   $\mathfrak{F}_\Lambda := \bigoplus_{N=0}^{\dimfib M^d} \Hi_{\Lambda,N}$, 
where $\Hi_{\Lambda,0} := \C$. 
All these Hilbert spaces   are finite-dimensional and thus all linear operators on them  are bounded.
Let $a_{i,x}$ and $a_{i,x}^*$, $i=1,\ldots,\dimfib$, 
$x\in\Gamma$, be the standard fermionic 
annihilation and creation operators satisfying 
the canonical anti-commutation relations
\[
\{ a_{i,x}, a_{j,y}^* \} = \delta_{i,j} \delta_{x,y} {\bf 1}_{\mathfrak{F}_\Lambda}\quad\mbox{and}\quad \{ a_{i,x}, a_{j,y}  \} = 0 =  \{ a_{i,x}^*, a_{j,y}^*  \}\,,
\]
where $\{A,B\} = AB +BA$. 
For a   subset $X\subset \Lambda$ we denote by 
$\mathcal{A}_X\subset \mathcal{L}(\mathfrak{F}_\Lambda)$ 
the algebra of operators generated by the set
$\{   a_{i,x}, a_{i,x}^*\,|\, x\in X\,, i=1,\ldots, \dimfib\}$. 
Those elements of $\mathcal{A}_X$ commuting with the number operator 
\[
\mathfrak{N}_X := \sum_{x\in X} a_x^*a_x := \sum_{x\in X} \sum_{j=1}^\dimfib a_{j,x}^*a_{j,x}
\]
form a sub-algebra $\mathcal{A}_X^\mathfrak{N}$ of $\mathcal{A}_X$ contained in the sub-algebra $\mathcal{A}_X^+$ of even elements, i.e.\ $\mathcal{A}_X^\mathfrak{N}\subset  \mathcal{A}_X^+\subset  \mathcal{A}_X $. We will use the vector notation  $a_x = (a_{1,x},\ldots, a_{\dimfib,x})$   without further notice in the following.

\subsection{Interactions and associated operator-families}

An  interaction $\Phi= \{\Phi^{\epsi,\Lambda}\}_{\epsi\in(0,1],\,\Lambda=\Lambda(M), \, M \in 2\N}$ is a family of maps  
\[
\Phi^{\epsi,\Lambda}:  \{ X\subset \Lambda\} \to \bigcup_{X \subset\Lambda} \mathcal{A}_X^\mathfrak{N} \,,\quad X\mapsto\Phi^{\epsi,\Lambda}(X)\in \mathcal{A}_X^\mathfrak{N}\,,
\]
from subsets of $\Lambda$ into   the set of   operators commuting with the number operator $\mathfrak{N}_X$.
The   operator-family  $A  = \{ A^{\epsi,\Lambda}\}_{\epsi\in(0,1],\,\Lambda}$ 
associated with the   interaction  $\Phi$ is the family of   operators
\begin{equation}\label{def:ham}
A^{\epsi,\Lambda} \equiv A^{\epsi,\Lambda}(\Phi) :=  \sum_{X \subset \Lambda} \Phi^{\epsi,\Lambda}(X)  \in \mathcal{A}_\Lambda^\mathfrak{N}\,.
\end{equation}
In the case that an interaction or the associated operator-family does not depend 
on the parameter~$\epsi$, we will drop the superscript $\epsi$ in the notation.
Note also that ``interaction'' is used here as a mathematical term for the above kind of object and should not be 
confused with the physics notion of interaction.

In order to turn the vector space of interactions into a normed space, it is useful to introduce 
the following functions that will serve to control the range of an interaction (cf.\ e.g.\ \cite{NSY} and references therein). Let 
\[
F (r) := \frac{1}{(1+r)^{d+1}}   
\qquad\mbox{and}\qquad 
F_\zeta(r) := \frac{\zeta(r)}{(1+r)^{d+1}}\,,
\]
where 
\begin{align*}
\zeta\in \mathcal{S} & := \{ \zeta:[0,\infty)\to (0,\infty)\,|\, \mbox{$\zeta$ is bounded, non-increasing, satisfies }\\
&\quad\quad\zeta(r+s) \geq \zeta(r)\zeta(s)\;\mbox{ for all } r,s\in[0,\infty)\mbox{ and } \\
&\quad\quad \sup\{ r^n\zeta(r)\,|\, r\in [0,\infty)\} <\infty \mbox{ for all } n\in\N\}\,.
\end{align*}
For any $\zeta \in \mathcal{S}$ and $n\in \N_0$, a corresponding norm on the vector space of  interactions is   defined by
 \[
\|\Phi\|_{\zeta,n } :=   \sup_{\epsi\in(0,1]}\sup_{\Lambda} \sup_{x,y\in\Lambda} \sum_{\substack{X \subset \Lambda:\\ \{x,y\}\subset X}} \mbox{$\Lambda$-diam}(X)^n \frac{\|\Phi^{\epsi,\Lambda}(X)\|}{F_\zeta(d^\Lambda (x,y))}  \,.
\]
 Here $\mbox{$\Lambda$-diam}(X)$ denotes the diameter of the set $X\subset\Lambda$ with respect to the metric $d^\Lambda$.
 The prime example for a function $\zeta\in\mathcal{S}$ is  $\zeta(r) = \E^{-ar}$ for some $a>0$. For this specific choice of $\zeta$ we write $F_a$ and $\|\Phi\|_{a,n}$ for the corresponding norm. 
However, for technical reasons the use of the more general decay functions $\zeta$ in $\mathcal{S}$ seems unavoidable, see also the remark after Lemma~\ref{lemma:I2}.
 \begin{SCfigure}[1.15]
   \includegraphics[width=0.33\textwidth]{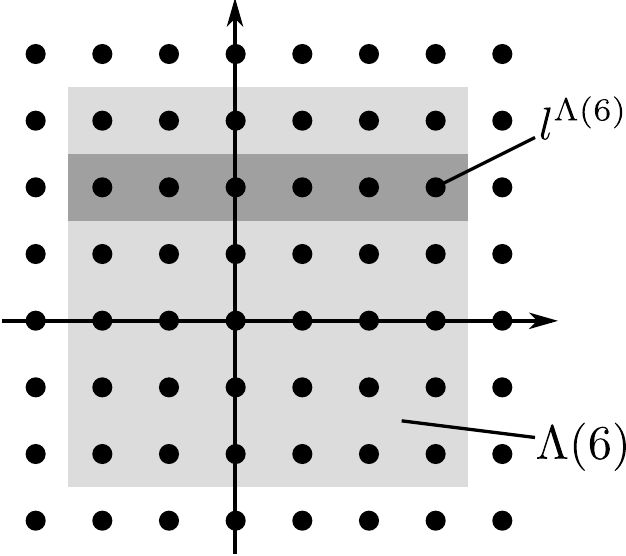}
    \caption{\small The light shaded region is the cube $\Lambda(6)$. The darker shaded region is the hyperplane defined by the localization vector $L$ with $\ell =(0,1)$ corresponding to localisation   in the $2$-direction around the point $l^{\Lambda(6)}$. An interaction with finite $\|\cdot\|_{\zeta,n,L}$-norm is then localized near this hyperplane. } 
    \label{fig:1}
\end{SCfigure}

It is important for applications to consider also interactions 
that are localised in certain directions around 
certain locations. To this end we introduce the 
space of localisation vectors 
\[
\loc :=  \{0,1\}^d \times {\textstyle \prod}_{M=2}^\infty \,\Lambda(M)\,.
\]
Note that
 $L=: ( \ell, l^{\Lambda(2)},  l^{\Lambda(4)}, \ldots)\in \loc$ 
 defines for each $\Lambda$ a $(d-|\ell|)$-dimensional hyperplane 
 through the point $l^\Lambda\in\Lambda$ which is parallel to the 
 one given by $\{x_j=0 \text{ if } \ell_j=1\}$.
 Here $|\ell|:=|\{ \ell_j =1\}|$ is the number   of   constrained directions. 
 The distance of a point $x\in\Lambda$ to this hyperplane is
 \begin{equation} \label{def:distxL}
 {\rm dist}(x,L) := \sum_{j=1}^d   | (x\lminus l^\Lambda)_j| \,\ell_j
 \end{equation}
 and we define for $\alpha>0$ 
a  new ``distance'' on $\Lambda$ by 
 \[
 d^\Lambda_{L_\alpha}(x,y) := d^\Lambda(x,y) +  \alpha \left( {\rm dist}(x,L) + {\rm dist}(y,L) \right)\,.
 \] 
 Note that $d^\Lambda_{L_\alpha}$ is no longer a metric  on $\Lambda$ 
 but obviously still satisfies the triangle inequality. Moreover, if $L\in \loc$ is trivial, i.e.\ $|\ell|=0$, then 
 $ d^\Lambda_{L_\alpha} =  d^\Lambda $. 
 The corresponding norms are denoted by 
  \[
\|\Phi\|_{\zeta,n,L_\alpha } :=   \sup_{\epsi\in(0,1]}\sup_{\Lambda} \sup_{x,y\in\Lambda} \sum_{\substack{X \subset \Lambda:\\ \{x,y\}\subset X}} \mbox{$\Lambda$-diam}(X)^n \frac{\|\Phi^{\epsi,\Lambda}(X)\|}{F_\zeta(d^\Lambda_{L_\alpha} (x,y))}  \,.
\]
In the following we will either have $\alpha=1$ or $\alpha=\epsi$.
 
An operator-family $A$ with interaction $\Phi_A$ 
such that $\|\Phi_A\|_{\zeta,0,L_\alpha}<\infty$ 
  for some $\zeta\in\mathcal{S}$ is called 
\emph{quasi-local and $L_\alpha$-localised}.  
One crucial property of quasi-local $L_\alpha$-localised 
operator-families is that the norm of the finite-size operator $A^{\epsi,\Lambda}$
grows at most as the volume  of its ``support'' (cf.\ Lemma~\ref{lemma:bound})
\[
\|A^{\epsi,\Lambda}\|\leq C\, \|\Phi_A\|_{\zeta,0,L_\alpha} \, \alpha^{-|\ell|} \,M^{d-|\ell|}\,.
\]
Let $\mathcal{B}_{\zeta,n,L_\alpha}$ be the 
Banach space of   interactions with 
finite $\|\cdot\|_{\zeta,n,L_\alpha}$-norm, and put
\begin{eqnarray*}
&& \mathcal{B}_{\mathcal{S},n,L_\alpha} :=  \bigcup_{\zeta\in\mathcal{S}}  \mathcal{B}_{\zeta,n,L_\alpha}\,, \qquad \mathcal{B}_{\mathcal{E},n,L_\alpha} := \bigcup_{a>0}  \mathcal{B}_{a,n,L_\alpha} \,,\\&&
  \mathcal{B}_{\mathcal{S},\infty,L_\alpha} := \bigcap_{n\in\N_0}  \mathcal{B}_{\mathcal{S},n,L_\alpha} \,,\; \quad \mathcal{B}_{\mathcal{E},\infty,L_\alpha} := \bigcap_{n\in\N_0}  \mathcal{B}_{\mathcal{E},n,L_\alpha}\,.
\end{eqnarray*}
Note that $\Phi \in \mathcal{B}_{\mathcal{S},\infty,L_\alpha} $ 
merely means that there exists a sequence $\zeta_n\in\mathcal{S}$ such that
$\Phi\in\mathcal{B}_{\zeta_n,n,L_\alpha}$ for all $n\in\N_0$.
The corresponding spaces of   operator-families are denoted by 
$\mathcal{L}_{\zeta,n,L_\alpha}$, $ \mathcal{L}_{\mathcal{E},n,L_\alpha}$, 
$\mathcal{L}_{\mathcal{E},\infty,L_\alpha}$, $ \mathcal{L}_{\mathcal{S},n,L_\alpha}$, 
and $\mathcal{L}_{\mathcal{S},\infty,L_\alpha}$ respectively: that is, an  
operator-family $A$ belongs to $\mathcal{L}_{\zeta,n,L_\alpha}$ if it can 
be written in the form \eqref{def:ham} with an  
 interaction in~$\mathcal{B}_{\zeta,n,L_\alpha}$, and similarly for the other spaces. 
  Lemma~A.1 in \cite{MT} shows that the spaces $ \mathcal{B}_{\mathcal{S},n,L_\alpha} $
   and therefore also $ \mathcal{B}_{\mathcal{S},\infty,L_\alpha} $ are indeed  vector spaces.
   Note that they are   not  {algebras} 
of operators, i.e.\ the product of two quasi-local $L_\alpha$-localised 
operators  need neither be quasi-local nor $L_\alpha$-localised, but 
they are closed under taking  {commutators},
see Lemmas~\ref{lemma:commutator1} and \ref{lemma:ads} 
in Appendix~\ref{app:tech}. When we don't write the
 index $L_\alpha$, this means that $L=0 := (\vec{0},0,0,\ldots) $ and the 
 interaction (respectively the operator-family) is quasi-local but not localised 
 in any direction.

\subsection{Slowly varying potentials}

Roughly speaking, a slowly varying potential is a function $v$ on $\Lambda$ 
with Lipschitz-constant of order $\epsi$. 

\begin{definition}[\bf Slowly varying potentials]
A  slowly varying potential is a family of  functions   $v = \{v^{\epsi, \Lambda} :\Lambda \to \R\}_{  \epsi\in(0,1],\,\Lambda}$  such that  
\[
   C_v := \sup_{\epsi\in(0,1]}\sup_\Lambda \sup_{x,y\in \Lambda}\frac{ |  v^{\epsi,\Lambda}( x) -   v^{\epsi,\Lambda}(y) |}{\epsi\cdot d^\Lambda(x,y)}  <\infty\,.
\]
The space of slowly varying potentials is denoted by $\SVP$.
\end{definition}
Let us give some typical examples. If $\Lambda$ is open in the $j$-direction, then for any $v_1\in C^1(\R,\R)$ (not necessarily bounded!) with $\|v_1'\|_\infty<\infty$ the functions
\begin{eqnarray*}
v_1^{\epsi,\Lambda}:\Lambda\to \R\,,&& x\mapsto v^{\epsi,\Lambda}_1(x) := v_1(\epsi x_j)\\
\tilde v_1^{\epsi,\Lambda}:\Lambda\to \R\,,&& x\mapsto \tilde v^{\epsi,\Lambda}_1(x) := \epsi v_1(  x_j)
\end{eqnarray*}
 both satisfy, by the mean-value theorem,   for any $x,y\in \Lambda$
 \[
  |v_1^{\epsi,\Lambda}( x) -   v_1^{\epsi,\Lambda}(y)|   \leq  \epsi\cdot d^\Lambda(x,y) \, \|v_1'\|_\infty\quad\mbox{and}\quad 
 | \tilde v_1^{\epsi,\Lambda}( x) -   \tilde v_1^{\epsi,\Lambda}(y)|   \leq   \epsi\cdot d^\Lambda(x,y) \,\|v_1'\|_\infty
  \,.
\]
Hence, $v_1,\tilde v_1 \in \SVP$ with $C_{v_1}= C_{\tilde v_1}= \|v_1'\|_\infty$ in this case. In particular, $v_1(x) = x_j$ would be a viable option.
In the case that $\Lambda$ is closed in  the $j$-direction, any function  $v_2: (-1/2,1/2] \to \R$ that is $C^1$ with periodic boundary conditions defines, by the same reasoning, 
 a slowly varying potential 
\[
v_2^{\epsi,\Lambda} (x) :=  \epsi M v_2\left(\frac{x_j}{M}\right)
\quad \mbox{with}\quad C_{v_2} =   \|    v_2' \|_\infty\,.
\]
With a slowly varying potential $v^{\epsi,\Lambda}$ we associate a corresponding operator-family $V_v$   defined by  
\[
V_v^{\epsi,\Lambda} := \sum_{x\in\Lambda} v^{\epsi,\Lambda}(x) \,a^*_xa_x\,.
\]
The key property of a slowly varying potential is that its commutator with any quasi-local operator-family  is a quasi-local operator-family of order $\epsi$, as stated more precisely in the following lemma.
\begin{lemma}\label{lemma:Vcomm}
Let $V_v$ be the operator-family of a slowly varying potential $v\in\SVP$ and $A\in \mathcal{L}_{\mathcal{S},\infty,L_\alpha}$, i.e.\ $A\in \mathcal{L}_{\zeta_k,k+d+1,L_\alpha}$ for some sequence $\zeta_k$ in $\mathcal{S}$.  Then there exists an operator-family $A_{v}\in \mathcal{L} _{\mathcal{S},\infty,L_\alpha}$
with interaction $\Phi_{A_{v}}\in \mathcal{B} _{\mathcal{S},\infty,L_\alpha}$ satisfying 
\[
 \|\Phi_{A_{v}}\|_{\zeta_k,k,L_\alpha} \leq \tfrac{\dimfib}{2} \,C_v \,\| \Phi^\Lambda_{A}\|_{\zeta_k,k+d+1,L_\alpha}
\]
such that 
\[
[A, V_v] = \epsi A_{v}\,.
\]
\end{lemma}
\noindent The proof of Lemma~\ref{lemma:Vcomm} is given in Appendix~\ref{app:Vlemma}.

\section{Non-equilibrium almost-stationary states} \label{sec:NEASS}

We have now all the tools to formulate our main result about non-equilibrium almost-stationary states (NEASS) for time-independent  Hamiltonians   of the form 
\[
H = H_0 + V_v + \epsi H_1\,.
\]
The results of this section will   follow as corollaries of the space-time adiabatic theorem, Theorem~\ref{theorem:Adi} of Section~\ref{sec:spacetime} and the underlying space-time adiabatic expansion, Proposition~\ref{proposition:Adi}.  However, we consider these special cases conceptually most important and it might be difficult  to read  them directly off  the rather technical general statement. Moreover, they are at the basis of the proof of linear response theory given in Section~\ref{sec:LinearResponse}. 

\bigskip
\noindent {\bf (A1)  Assumptions on $H_0$.}\\ {\em
Let $H_0 \in  \mathcal{L}_{a,n}$   for all $n\in\N_0$ and some $a>0$ such that all $H_0^\Lambda$ are self-adjoint. We assume that there exists $M_0\in\N$ such that for all $M\geq M_0$ and corresponding $\Lambda=\Lambda(M)$  the ground state $E_*^\Lambda $ of the operator $H_0^\Lambda$, with associated spectral projection $P_*^\Lambda $, has the following properties:   
The degeneracy $\dege^\Lambda $ of $E_*^\Lambda $ and  the spectral gap are uniform in the system size, i.e.\  there exist $\dege\in\N$  and $g>0$  such that
$\dege^\Lambda \leq \dege$ and $\mathrm{dist}(E_*^\Lambda , \sigma(H_0^\Lambda )\setminus E_*^\Lambda )\geq g >0$ for all $M\geq M_0$.
}
\medskip

A typical  example of a physically relevant  Hamiltonian $H_0$ to which our results apply is the family of operators
\begin{equation}\label{example:HTPHW}
H^\Lambda_{T\phi W\mu}   =  \sum_{ (x,y)\in \Lambda^2}  \hspace{-8pt}  a^*_x \,T( x\lminus y) \,a_y+ \sum_{x\in\Lambda}  a^*_x\phi( x)a_x  
  + \hspace{-4pt} \sum_{\{x,y\}\subset \Lambda } \hspace{-8pt} a^*_xa_x \,W( d^\Lambda(x,y))\,a^*_ya_y - \mu \,\mathfrak{N}_\Lambda\,.
\end{equation}
For example, if the kinetic term $T :\Gamma \to \mathcal{L}(\C^\dimfib)$ is a compactly supported function with $T( -x) = T(  x)^*$, the potential term $\phi :\Gamma \to \mathcal{L}(\C^\dimfib)$ is a bounded function taking values in the self-adjoint matrices,   the two-body interaction $W :\N_0\to 
\mathcal{L}(\C^\dimfib)$ is compactly supported and also takes values in the self-adjoint matrices, then $H_{T\phi W\mu}  \in \mathcal{L}_{a,\infty}$ for any $a>0$. 
For non-interacting systems, i.e.\  $W\equiv0$, on a torus and $V$ sufficiently small, the gap condition  can be checked rather directly and one typically finds it to be satisfied for   values of the chemical potential $\mu\in\R$ lying in specific intervals.  It was recently shown in \cite{H,DS} that for sufficiently small  $W\not=0$ the gap remains open. 

\bigskip

\noindent {\bf (A2)$_{L^H,\gamma}$  Assumptions on the perturbations.} \\{\em
Let $H_1 \in\mathcal{L}_{\mathcal{S},\infty,L^H_1}$ be self-adjoint  and let $v\in \SVP$ be a  slowly varying potential and   $V_v$ the corresponding operator-family.  
If $L^H$ is nontrivial we assume that $V_v$ is $L^H_{\epsi^\gamma}$-localised for $\gamma\in\{0,1\}$ in the sense
that
 \[ [H_0 , \tfrac{1}{\epsi}V_v ]\in \mathcal{L}_{\mathcal{S},\infty,  L^H_{\epsi^\gamma}}\,.
 \]
}
\medskip

\begin{remark}\rm
\balp
\item The parameter $\gamma\in \{0,1\}$ will appear in the statements of our results for the following reason. While a perturbation of the form $  H_1$ can be localised in a fixed  neighbour\-hood of a $(d-|\ell_H|)$-dimensional hyperplane and thus on a volume of order $ M^{d-|\ell_H|}$, a slowly varying potential is typically only localised in an $\epsi^{-1}$-neighbour\-hood of such a $(d-|\ell_H|)$-dimensional hyperplane\footnote{Think for example of a smooth step function along a codimension one hyperplane with finite step size but order $\epsi$ slope.}, and thus on a volume of order $\epsi^{-|\ell_H|} M^{d-|\ell_H|}$.
Hence, the proper normalisation   in the statements will then contain a factor $\epsi^{-|\ell_H|\gamma} M^{d-|\ell_H|}$.

\item
It is not known whether perturbing a generic gapped $H_0$ by a small local perturbation $\epsi H_1$ leaves  the spectral gap open for  $\epsi$   sufficiently small, see e.g.\ the discussion in Section~1.5 in \cite{BBDF}.  However,   perturbing by a slowly varying potential $V_v$ such that $\|v^\Lambda\|_\infty\sim M$  clearly  closes the gap of $H_0$ for all values of $\epsi$.
\ealp
\end{remark}

In a nutshell, the following theorem about non-equilibrium almost-stationary states  says that for any $n\in\N$ there exists a state $\Pi_n$, that is obtained from the ground state $P_*$ of $H_0$ 
 by a  unitary transformation with small quasi-local generator, such that $[\Pi_n , H] =\Or(\epsi^{n+1})$. As a consequence, $\Pi_n$ is almost-invariant under the dynamics generated by $H$. In cases where the perturbation is so small that the gap of $H_0$ remains open, $\Pi_n$ is $\epsi^{n+1}$-close to the ground state projection of $H$ in the sense that the Taylor polynomials of both operators agree up to order $\epsi^n$. In general, however, $\Pi_n$ can     differ  greatly from the ground state or any other thermal equilibrium state of~$H$. This is why we call $\Pi_n$ a non-equilibrium almost-stationary state.

\begin{theorem}[\bf Non-equilibrium almost-stationary states]\label{SpaceAdiabaticThm}
 Let the Hamiltonian $H  =H_0 + V_v + \epsi H_1$ satisfy (A1) and (A2)$_{L^H,\gamma}$ for some $L^H\in \loc$ and $\gamma\in\{0,1\}$.  Then there is a
sequence of  self-adjoint operator-families $(A_\mu)_{\mu\in\N}$ with $  A_\mu\in \mathcal{L}_{\mathcal{S},\infty,L^H_{\epsi^\gamma} }$   for all $\mu\in\N$, such that for any $n\in\N$ it holds
that the projector 
\[
  \Pi^{\epsi,\Lambda}_n  := \E^{\I \epsi  S^{\epsi,\Lambda}_n } \,  P_*^{\Lambda}  \E^{- \I \epsi  S^{\epsi,\Lambda}_n } \quad\mbox{with}\quad  
 S^{\epsi,\Lambda}_n  := \sum_{\mu = 1}^{n} \epsi^{\mu-1}  A^{\epsi,\Lambda}_\mu  
   \]
satisfies 
\begin{equation}\label{NeassExpand}
 [  \Pi^{\epsi,\Lambda}_n , H^{\epsi,\Lambda}]  = \epsi^{n+1} \, [  \Pi^{\epsi,\Lambda}_n ,R^{\epsi,\Lambda}_n]
\end{equation}
for some $R _n \in  \mathcal{L}_{\mathcal{S},\infty,L^H_{\epsi^\gamma} }$.

The state $\Pi^{\epsi,\Lambda}_n$ is almost-stationary  for  the dynamics generated by $H^{\epsi,\Lambda}$ in the following sense:
Let $\rho^{\epsi,\Lambda}(t)$ be the solution of the Schr\"odinger equation
 \[ 
\I \tfrac{\D}{\D t}  \,  \rho^{\epsi , \Lambda}(t) = [ H^{\epsi,\Lambda}  ,  \rho^{\epsi, \Lambda}(t)] \qquad 
  \mbox{with}\qquad  \rho^{\epsi, \Lambda}(0) = \Pi_n^{\epsi,  \Lambda} \,.
  \] 
    Then for any $\zeta\in \mathcal{S}$, $L\in\loc$ with $\ell\cdot\ell_H=0$, and $m\geq 1$ there exists a constant $C$ such that for any $B\in \mathcal{L}_{\zeta,2d,L_1}$   
  \begin{eqnarray}\label{AdiState00}\lefteqn{\hspace{-2cm}
\sup_{\Lambda(M):M\geq M_0} \;  \tfrac{\epsi^{|\ell_H| \gamma}}{M^{d -|\ell|-|\ell_H|}} \left|  \tr  (\rho^{\epsi, \Lambda}(t)  B^\Lambda )- \tr  (  \Pi_n^{\epsi, \Lambda}  B^\Lambda)\right|}\nonumber \\& &\hspace{2cm} \leq\;C \, {\epsi^{n +1 }}   \, (1+|\epsi^m t|^{d+1}) \,\|\Phi_B\|_{\zeta,2d,L_1}
\,.
\end{eqnarray}
\end{theorem}

\noindent 
While \eqref{NeassExpand} is an immediate consequence of Proposition~\ref{proposition:Adi} for $\delta=0$, 
\eqref{AdiState00} is not strictly speaking a corollary of the results of Section~\ref{sec:spacetime}. But it can easily  be  concluded by combining  Proposition~\ref{proposition:expand1} with a simple Duhamel argument for $\Pi_{\tilde n}^{\epsi, \Lambda} $ with $\tilde n = n+m(d+1)$ and $\delta=\epsi^{m-1}$ as in the proof of Theorem~\ref{theorem:Adi}.

\medskip
\begin{remark}\rm 
\balp
 \item The trace in \eqref{AdiState00} is normalised by the volume of the region in~$\Lambda$ where the perturbation ``acts'' 
 and the observable $B$ ``tests''. This region is the intersection of the neighbourhoods of two transversal hyperplanes, 
 one of co-dimension $ |\ell_H|$ and thickness of order $\epsi^{-1}$ (resp.\ of order one if $\gamma=0$), and one of co-dimension $ |\ell|$ and  thickness of order one.
 Hence, the relevant volume is of order  $ \frac{M^{d -|\ell|-|\ell_H|}}{\epsi^{|\ell_H|\gamma}} $.
 If the perturbation acts everywhere ($|\ell_H|=0$) and the observable $B$ tests everywhere ($|\ell|=0$), then the trace in \eqref{AdiState00} is the usual trace per unit volume.
 \item Note that 
  the first $n$ terms in the asymptotic expansion of the NEASS $\Pi^{\epsi,\Lambda}_n$ are
 uniquely determined by the requirements that $(\Pi^{\epsi,\Lambda}_n)^2-\Pi^{\epsi,\Lambda}_n=\Or(\epsi^{n+1}) $ and $[H^{\epsi,\Lambda}, \Pi^{\epsi,\Lambda}_n]=\Or(\epsi^{n+1})$,   cf.\ e.g.\ \cite{PST2}. Hence,  all terms in this expansion could be equally well obtained by just applying standard regular perturbation theory, e.g.\ \cite{K}. However, the unitary $ \E^{\I \epsi  S^{\epsi,\Lambda}_n }$ is not uniquely determined and one key feature of the above result is that the latter can be chosen quasi-local. Otherwise  the almost invariance in \eqref{AdiState00} uniformly in the volume could not be concluded.

   \item The NEASS $\Pi^{\epsi,\Lambda}_n$ agrees with the state obtained by the quasi-adiabatic 
   evolution  \cite{HW,BMNS}  up to terms of order $\epsi^{n+1}$ in the following sense:  The quasi-adiabatic evolution of $Q^{0,\Lambda} := P_*^\Lambda$ is the solution to the evolution equation 
   \begin{equation}\label{SpecFlow}
\I\tfrac{\D}{\D\epsi} Q^{\epsi,\Lambda} = [ \mathcal{I}_{H^{\epsi,\Lambda}}( \tfrac{\D}{\D\epsi}H^{\epsi,\Lambda}), Q^{\epsi,\Lambda}]\quad\mbox{with}\quad Q^{0,\Lambda} = P_*^{ \Lambda}\,.
\end{equation}
   Here $ \mathcal{I}_{H^{\epsi,\Lambda}}$ is the quasi-local inverse of the Liouvillian discussed in Appendix~\ref{app:qli}. It is straightforward to check that if  $H^{\epsi,\Lambda}$    has a gapped ground state $P_*^{\epsi,\Lambda}$ uniformly for all $\epsi\in[0,\epsi_0]$, then the quasi-adiabatic evolution $Q^{\epsi,\Lambda} $ of $P_*^\Lambda$ actually agrees with $P_*^{\epsi,\Lambda}$, i.e.\ $Q^{\epsi,\Lambda}=P_*^{\epsi,\Lambda}$ for all $\epsi\in[0,\epsi_0]$.
In particular,    $Q^{\epsi,\Lambda} $ and $P_*^{\epsi,\Lambda}$ have the same Taylor expansion in powers of $\epsi$. But also  the NEASS $\Pi^{\epsi,\Lambda}_n$ and the eigenprojection $P_*^{\epsi,\Lambda}$ have the same Taylor expansion in powers of $\epsi$ up to order $n$, as remarked under item (b).

 \ealp
\label{remark:NEASS}
\end{remark}

For applications it is   essential to have an explicit expansion of expectation values in the NEASS   in powers of $\epsi$ with coefficients given by expectations in the unperturbed  ground state, the linear term being typically most important. The following statement is a special case of Proposition~\ref{proposition:expand} about the expansion of the time-dependent NEASS.
For better readability we will   use the notation 
\[
\langle B  \rangle_{\rho^\Lambda} := \tr(\rho^\Lambda \,B^\Lambda)
\]
for the expectation value of an observable $B^\Lambda$  in a  state  $\rho^\Lambda$.

\begin{proposition}[\bf Asymptotic expansion of the NEASS]\label{proposition:expand1}
Under the assumptions of Theorem~\ref{SpaceAdiabaticThm}
there exist linear maps $\mathcal{K}^{\epsi,\Lambda}_j: \mathcal{A}_\Lambda^\mathfrak{N} \to  \mathcal{A}_\Lambda^\mathfrak{N}$, $j\in\N$, given by nested commutators with operator-families in $\mathcal{L}_{\mathcal{S},\infty,L^H_{\epsi^\gamma} }$, such that for any $n,m\in\N_0$ with $n\geq m$, any $\zeta\in \mathcal{S}$, and any $L\in\loc$ with $\ell\cdot\ell_H=0$ there is a constant $C$ such that
  for any $B\in \mathcal{L}_{\zeta,m+1,L_1}$ it holds that 
\[
\sup_{\Lambda(M):M\geq M_0}\,  \tfrac{\epsi^{|\ell_H|\gamma}}{M^{d-|\ell|-|\ell_H|}} \left|  \, \left\langle B  \right\rangle_{\Pi^{\epsi,\Lambda}_n} -   \sum_{j=0}^m \epsi^j\,
\left\langle \mathcal{K}^{\epsi }_j [B ]  \right\rangle_{P_*^\Lambda}
\right| \;=\; C\,\epsi^{m+1}\, \|\Phi_B\|_{\zeta,m+1,L_1} \,,
\]
with
\[
\sup_{\Lambda(M):M\geq M_0}\; \sup_{\epsi\in(0,1]}  \tfrac{\epsi^{|\ell_H|\gamma}}{M^{d-|\ell|-|\ell_H|}} \,\left\langle \mathcal{K}^{\epsi }_j [B ]  \right\rangle_{P_*^\Lambda} <\infty
\]
for all $j\in \N$.  
The first terms in the expansion are given by
 \[
\mathcal{K}_0^{\epsi,\Lambda} = {\rm Id} \,,\quad  \mathcal{K}_1^{\epsi,\Lambda} [\cdot] =- \I\, [A_1^{\epsi,\Lambda},\,\cdot\,] \,, \quad\mbox{and}\quad  
\mathcal{K}_2^{\epsi,\Lambda} [\cdot] =- \I\, [A_2^{\epsi,\Lambda},\,\cdot\,]
- \tfrac{1}{2}[A_1^{\epsi,\Lambda},[A_1^{\epsi,\Lambda},\cdot ]]\,,
 \]
 where $A_1 \in\mathcal{L}_{\mathcal{S},\infty,L^H_{\epsi^\gamma} }$ and $A_2 \in\mathcal{L}_{\mathcal{S},\infty,L^H_{\epsi^\gamma} }$ were constructed in Theorem~\ref{SpaceAdiabaticThm}.
More expli\-citly, abbreviating $V^{ \Lambda}:= \tfrac{1}{\epsi}V^{\epsi,\Lambda}_v+H_1^{\epsi,\Lambda}$ it holds that
 \begin{equation}\label{K1}
 \left\langle \mathcal{K}^{\epsi }_1[B ]  \right\rangle_{P_*^\Lambda}=
 \left\langle \left[  \left[ R _0,V \right],B 
\right]\right\rangle_{P_*^\Lambda}
 \end{equation}
 and
  \begin{eqnarray}\label{K2}
 \left\langle \mathcal{K}^{\epsi }_2 [B]  \right\rangle_{P_*^\Lambda} &=&2{\rm Re} \,
  \left\langle   \left(   V R_0V R _0-V P_*  V  R _0R _0 \right)B
       \right\rangle_{P_*^\Lambda }
\nonumber \\
&& + 
 \left\langle   \left(V R _0B  R _0V -V R _0R _0V P_*B \right)
       \right\rangle_{P_*^\Lambda}
\;\,,\, 
 \end{eqnarray}
where 
 $R^\Lambda_0 := (H_0^\Lambda - E_*^\Lambda)^{-1} ( 1- P_*^\Lambda)$ denotes the reduced resolvent of $H_0$.
\end{proposition}

 Finally, again as a corollary of Proposition~\ref{proposition:Adi} and Theorem~\ref{theorem:Adi}, we note that  adia\-batically  switching on the perturbation  drives the ground state of the unperturbed Hamiltonian into the NEASS of the perturbed Hamiltonian up to small errors in the adiabatic parameter and independently of the precise form of the switching function.

 \begin{proposition}[\bf Adiabatic switching and the NEASS]\label{prop:switch}
  Let the Hamiltonian $H  =H_0 + V_v + \epsi H_1$ satisfy (A1) and (A2)$_{L_H,\gamma}$ for some $L_H\in \loc$ and $\gamma\in\{0,1\}$. Let   $f:\R \to \R$ be a smooth ``switching'' function with   
  $f(t) =0$ for $t\leq -1$ and $f(t) = 1$ for $t\geq  0$, and define  $H(t) := H_0 + f(t) (V_v + \epsi H_1)$. 
  Let $\rho^{\epsi , \Lambda,\eta,f}(t)$ be   the solution of the adiabatic time-dependent Schr\"odinger equation 
   \begin{equation}\label{Schroedinger2}
\I\,   \eta \tfrac{\D}{\D t}  \,  \rho^{\epsi , \Lambda,\eta,f}(t) = [ H^{\epsi,\Lambda}(t)  ,  \rho^{\epsi, \Lambda,\eta,f}(t)] 
  \end{equation}
  with adiabatic parameter $\eta\in(0,1]$  and initial datum $\rho^{\epsi, \Lambda,\eta,f}(t) = P_*^\Lambda$ for all $t\leq -1$.

Then
  for any $n>d$, $\zeta\in \mathcal{S}$ and $L\in\loc$ with $\ell\cdot\ell_H=0$ there exists a constant $C$ such that for any $B\in \mathcal{L}_{\zeta,2d,L_1}$  and  for all $t\geq 0$  
  \begin{align*} 
\sup_{\Lambda(M):M\geq M_0} \;  \tfrac{\epsi^{|\ell_H|\gamma}}{M^{d -|\ell|-|\ell_H|}} & \left| 
\left\langle B \right\rangle_{\rho^{\epsi, \Lambda,\eta,f}(t) }
- \left\langle B \right\rangle_{\Pi_n^{\epsi, \Lambda} }
\right|  \\& \leq \;\frac{\epsi^{n+1} + \eta^{n+1}}{\eta^{d+1}}  \,C \,(1+t^{d+1}) \,\|\Phi_B\|_{\zeta,2d,L_1} 
\,,
\end{align*}
where $\Pi_n^{\epsi, \Lambda} $ is the NEASS of $H^{\epsi,\Lambda}$ constructed in Theorem~\ref{SpaceAdiabaticThm}.
 \end{proposition}

\begin{remark}\label{timescalerem}\rm
Proposition~\ref{prop:switch} shows that, as long as the adiabatic parameter $\eta$ satisfies 
\[
1\gg \eta \gg \epsi^\frac{n+1}{d+1} \,,
\]
the initial ground state $P_*$ of $H_0$ evolves, up to a small error, into a NEASS $\Pi_n^{\epsi, \Lambda} $ that is independent of the form of the switching function $f$. 
Since $n\in\N$ can be chosen arbitrarily large, this means that whenever the adiabatic switching of the perturbation   occurs on a  time-scale of order $\epsi^{-m}$ with $m>0$,  the system will be driven into the same unique NEASS constructed in Theorem~\ref{SpaceAdiabaticThm}. 
Slower switching must be excluded, because, in general, the NEASS is an almost-invariant but not an invariant state for the instantaneous Hamiltonian. On time scales asymptotically larger than any inverse power of $\epsi$, the NEASS need not be  stable and might deteriorate because of tunnelling. Hence,  it is not surprising that the relevant time scale for the adiabatic switching process depends on the strength of the perturbation.
 \end{remark}

\section{Linear response theory}\label{sec:LinearResponse}
 
To    put our result on the justification of response theory, Theorem~\ref{response},  into proper context, 
  we   briefly recall the usual derivation of linear response formulas in the context of static perturbations. 
Assume that a system described by the Hamiltonian $H_0^\Lambda$ is initially  in its zero-temperature   equilibrium state~$P_*^\Lambda$, when a static perturbation $\epsi V^{\Lambda}$ is applied. To keep notation concise, in this section we will  again  write $V^{ \Lambda}:= \tfrac{1}{\epsi}V_v^{\epsi,\Lambda} +   H_1^\Lambda$ for the sum of the two types of perturbations we consider. The dynamical process of applying the perturbation is modelled by a time-dependent Hamiltonian 
$H^{\epsi,\Lambda }(\eta s) := H_0^\Lambda + f(\eta s)\,\epsi V^{ \Lambda} $, where   $f:\R\to \R$ is a 
 sufficiently regular switch function with $f(t)=0$ for $t\leq -1 $ and $f(t)=1$ for $t\geq 0$.
Thus, we have $H^{\epsi,\Lambda }(\eta s) = H_0$ for all $s\leq -\frac{1}{\eta}$ and $H^{\epsi,\Lambda }(\eta s) = H_0^\Lambda +  \epsi V^{ \Lambda} $ for all $s\geq 0$. The parameter $\eta>0$ controls the time scale $\frac{1}{\eta}$ on which the switching process occurs.
The state of the system at  time $s\geq 0$, when the perturbation is fully switched on, is obtained from solving  the time-dependent Schr\"odinger equation
$
\I\,    \frac{\D}{\D s}  \,  \tilde\rho^{\epsi , \Lambda,\eta,f}(s) = [ H^{\epsi,\Lambda}(\eta s)  ,  \tilde\rho^{\epsi, \Lambda,\eta,f}(s)] 
$
with initial condition $\tilde\rho^{\epsi, \Lambda,\eta,f}(s)= P_*^\Lambda$ for all $s\leq -\frac{1}{\eta}$.
It is convenient and common practice to rescale the time variable to $t=\eta s$, which yields the standard form of the Schr\"odinger equation with adiabatic parameter $\eta>0$, namely
\[
\I\, \eta   \tfrac{\D}{\D t}  \,   \rho^{\epsi , \Lambda,\eta,f}(t) = [ H^{\epsi,\Lambda}(  t)  ,   \rho^{\epsi, \Lambda,\eta,f}(t)] \,
\]
with $\rho^{\epsi , \Lambda,\eta,f}(t) = \tilde \rho^{\epsi , \Lambda,\eta,f}(t/\eta)$. 

The response of the system with respect to an observable $B^\Lambda$ at time $t\geq 0$ is now defined as 
\[
\sigma^{\epsi,\Lambda,\eta,f}(t) := \tfrac{1}{N(\Lambda)} \left( 
\left\langle B \right\rangle_{\rho^{\epsi,\Lambda,\eta,f}(t)} -
\left\langle B \right\rangle_{P_*^\Lambda} 
\right)\,,
\]
i.e.\ as the difference between the expectation of $B^\Lambda$ in the state after the perturbation $V^{ \Lambda}$ was turned on and its expectation in the initial ground state $P_*^\Lambda$ of the unperturbed Hamiltonian~$H_0^\Lambda$.
Here $N(\Lambda)$ is a normalisation depending on the localisation properties of the perturbation $V^{ \Lambda}$ and the observable  $B^\Lambda$. E.g., in the case of extensive quantities $B^\Lambda$ and perturbations $V^{ \Lambda}$ that act everywhere, $N(\Lambda)$ must be chosen proportional to the volume $|\Lambda|$.
Since in general many-body situations the quantity $\sigma^{\epsi,\Lambda,\eta,f}(t)$ is neither computable nor interesting, one considers the following asymptotic regimes,  that lead to explicit and practically useful formulas. First, since one is interested in macroscopic systems, to avoid finite size effects one takes the 
{\em thermodynamic limit} $\Lambda\to\Z^d$.
Second,  one expects that in the {\em adiabatic limit} $\eta\ll1$ of  slow switching the system settles in an (almost) stationary state that has no ``memory'' of the switching procedure, i.e.\ that  the response becomes independent of the precise form of the switching function~$f$ and also independent of the time $t\geq 0$. 
Finally, one is interested in {\em small perturbations} and thus in an expansion of the response in powers of~$\epsi $.

The standard linear response calculation   now proceeds by expanding 
$\sigma ^{\epsi,\Lambda,\eta,f}(0)$ first in powers of $\epsi$,
\begin{equation}\label{standard1}
\sigma ^{\epsi,\Lambda,\eta,f}(0) =:  \epsi \,\sigma^{\Lambda,\eta,f}_1 +  o_{\Lambda,\eta,f}(\epsi)\,,
\end{equation}
where $o_{\Lambda,\eta,f}(\epsi)$ denotes a remainder  term that is $o(\epsi)$   when $\Lambda$, $\eta$, and $f$ are kept fixed.
This expansion can be easily achieved by standard time-dependent perturbation theory and yields 
\begin{equation} \label{KuboL} 
 \sigma^{\Lambda,\eta,f}_1=
   -\tfrac{ \I}{N(\Lambda)}  \int_{-\infty}^0 f(\eta s)\, \left\langle \left[ B  (-s) ,  V \right]
  \right\rangle_{P_*^\Lambda} \,\D s\,.
\end{equation} 
Here we use the notation  $B^\Lambda(s) :=  \E^{ \I H_0^\Lambda s} B^\Lambda  \E^{-\I H_0^\Lambda s}$ for the Heisenberg time-evolution of an operator $B^\Lambda$.

Then one considers $\sigma^{\Lambda,\eta,f}_1$ in the adiabatic and thermodynamic limit and calls the resulting quantity
  \begin{equation}\label{standard2}
 \sigma_1 := \lim_{\eta\to 0} \lim_{\Lambda\to \Z^d} \sigma^{\Lambda,\eta,f}_1 =   -\I  \lim_{\eta\to 0}    \int_{-\infty}^0 f(\eta s)\, \left\langle \left[ B (-s) , V \right]
  \right\rangle_{P_*^\infty} \,\D s
  \end{equation}  
the {\em linear response coefficient}. Typically, if the limits exist, $\sigma_1$ is independent of $f$ and one chooses $f(\eta s) = \E^{\eta s}$ to simplify the explicit evaluation of \eqref{standard2}.
  
However, the procedure just described  is only justified when the remainder term in \eqref{standard1} is of lower order in $\epsi$ {\em uniformly} in  the volume $\Lambda$ and in the adiabatic parameter~$\eta$. 
This uniformity of time-dependent perturbation theory on long (adiabatic) time scales  is anything but obvious and can only be expected if the system indeed evolves into  an (almost) stationary state that looses all memory of the switching process. But the occurrence of such a  behaviour cannot hold unconditionally   and needs to be established under suitable additional assumptions.

Recently  Bachmann et al.\ \cite{BDF} were able to prove 
validity of linear response 
 for interacting spin systems and for  quasi-local perturbations 
 that {\em do not close the spectral gap} (i.e., they assume that $H^{\epsi,\Lambda }(t)$ has a spectral gap above its ground state $P_*^{\epsi,\Lambda}(t)$  uniformly in $\epsi\in [0,\epsi_0)$, $\Lambda$, and $t\in\R$).
 The key ingredient is their adiabatic theorem that shows that
 in the adiabatic limit $\eta\to 0$  for local observables $B$
 the expectation $\left\langle B \right\rangle_{\rho^{\epsi,\Lambda,\eta}(0)}  $ converges to $\left\langle B \right\rangle_{P_*^{\epsi,\Lambda}(0)}  $, i.e.\ that
 \[
\sigma^{\epsi,\Lambda }(0):=  \lim_{\eta \to 0 } \sigma^{\epsi,\Lambda,\eta,f}(0) =  \left\langle B \right\rangle_{P_*^{\epsi,\Lambda}(0)}- \left\langle B \right\rangle_{P_*^\Lambda}  
 \]
 uniformly in the volume $\Lambda$. Then, using the spectral flow, they show that the asymptotic expansion of $\sigma^{\epsi,\Lambda }(0)$   in powers of $\epsi$ starts with a linear term,
 \[
 \sigma^{\epsi,\Lambda }(0) = \epsi \,\sigma^\Lambda_1 + o (\epsi)\,,
 \]
 where, again, the error term is uniform in $\Lambda$.  
  As all limits are uniform in the volume $|\Lambda|$, one can take the thermodynamic limit in the end and, if it exists, it defines the physically meaningful linear response coefficient
  \[
  \tilde \sigma_1 := \lim_{\Lambda\to \Z^d}  \sigma^\Lambda_1\,.
  \]
  While Bachmann et al.\ \cite{BDF} do not discuss this in detail, it is likely that under rather mild assumptions one can show that $\tilde \sigma_1$ agrees with $  \sigma_1$ in \eqref{standard2}.
  
 We now show that, using the results presented in the previous section, a similar reasoning allows us to rigorously derive linear and also higher order response coefficients even for perturbations that close the gap. However, one important conceptual difference appears. Since the system evolves into an almost-invariant state with a life-time that depends on the strength $\epsi$ of the perturbation,  the   adiabatic switching must   occur on a  time-scale not longer than the life-time, cf.\  Remark~\ref{timescalerem}. 
 This puts a lower bound of the form $\eta\geq \epsi^m$ for some $m\geq 1$ on the adiabatic parameter $\eta$. On the other hand, as we will show, even a relatively fast switching with $\eta \leq \epsi^\frac{1}{m}$ for some $m\geq 1$ still leads to a response coefficient that has an asymptotic expansion in powers of $\epsi$ with coefficients independent of the switching function $f$.

\begin{theorem}[\bf Response theory to all orders]\label{response}
Under the same assumptions as in Proposition~\ref{prop:switch}
let again $\rho^{\epsi , \Lambda,\eta,f}(t)$ be the solution of the adiabatic time-dependent Schr\"odinger equation 
 \eqref{Schroedinger2}  with adiabatic parameter $\eta\in (0,1]$  and initial datum $\rho^{\epsi, \Lambda,\eta,f}(t) = P_*^\Lambda$ for all $t\leq -1$.

 For $L\in\loc$ with $\ell\cdot\ell_H=0$ and $B\in \mathcal{L}_{\mathcal{S},\infty,L_1}$   define   the normalised  response  as
 \[
 \sigma^{\epsi,\Lambda,\eta,f}(t) :=  \tfrac{\epsi^{|\ell_H|\gamma}}{M^{d-|\ell|-|\ell_H|}} \, \left(\left\langle B \right\rangle_{\rho^{\epsi,\Lambda,\eta,f}(t)} -
\left\langle B  \right\rangle_{P_*^\Lambda} 
\right)\,,
 \]
 and for $j\in\N$ the $j$th order response coefficient as
 \[
 \sigma^\Lambda_j := \tfrac{\epsi^{|\ell_H|\gamma}}{M^{d-|\ell|-|\ell_H|}} 
\left\langle \mathcal{K}^{\epsi }_j [B ] \right\rangle_{P_*^\Lambda} 
 \,,
 \]
 where the $\mathcal{K}^{\epsi,\Lambda}_j $'s were defined and explicit expressions for $\left\langle \mathcal{K}^{\epsi }_1[B ]  \right\rangle_{P_*^\Lambda} $ and $\left\langle \mathcal{K}^{\epsi }_2 [B ]  \right\rangle_{P_*^\Lambda} $ were given in Proposition~\ref{proposition:expand1}.
 
 Then for any $n \in\N$  and $m\geq 1$ there exists a constant $C\in\R$ independent of   $\epsi$, such that for  $t\geq 0$ and $r= \max\{2d,n+1\}$
 \begin{equation}\label{eq:response}
\sup_{\Lambda(M):M\geq M_0}\; \sup_{\eta\,\in\left[\epsi^m,\,\epsi^\frac{1}{m}\right]} \left|  \sigma^{\epsi,\Lambda,\eta,f}(t) 
 -   \sum_{j=1}^n \epsi^j  \sigma^\Lambda_j
 \right| \leq \epsi^{n+1}\,C\,(1+t^{d+1})\,\|\Phi_B\|_{\zeta_r,r,L_1}  \,.
 \end{equation}
\end{theorem}  

\begin{remark}\rm
\balp
\item The response coefficients $\sigma_j^\Lambda$ are independent of the switch  function $f$, the adiabatic parameter $\eta$, time $t\geq 0$, and also of $n$ and  $m$. For slowly varying potentials of the form $v^{\epsi,\Lambda}= \epsi v^\Lambda$ or perturbations of the form $\epsi H_1$, they are also independent of $\epsi$. 
Thus the   response   has an asymptotic expansion in the strength $\epsi$ of the perturbation that is uniform in the volume and with coefficients that are constant in time and have no memory of the switching.
\item 
Looking at the first two coefficients $\sigma_1^\Lambda$ and $\sigma_2^\Lambda$ given by \eqref{K1} and \eqref{K2}, one notes that they agree with what one would get from just applying regular perturbation theory for isolated eigenvalues, e.g.\ \cite{K}. This is true also for all higher order terms, as follows from Remark~\ref{remark:NEASS} (b).
\item Another formula for $\sigma_1^\Lambda= \tfrac{\epsi^{|\ell_H|\gamma}}{M^{d-|\ell|-|\ell_H|}}  \left\langle \left[  \left[V , R _0\right],B 
\right]\right\rangle_{P_*^\Lambda}
$ that can be checked by direct computation is
\[
\sigma_1^\Lambda =   - \I\,\tfrac{\epsi^{|\ell_H|\gamma}}{M^{d-|\ell|-|\ell_H|}} \lim_{\eta\searrow 0}  \int_{-\infty}^0 \E^{\eta s} \, \left\langle \left[ B (-s) , V \right]
  \right\rangle_{P_*^\Lambda} \,\D s\,,
\]
i.e.\ one chooses $f=\exp$ in \eqref{KuboL}. 
 \ealp\label{linrem}
\end{remark}  
  
   \begin{proof}[Proof of Theorem~\ref{response}.]
  According to Proposition~\ref{prop:switch},  for $\tilde n := m(n+ d+1)$ there exists $C_1\in\R$ such that    for $t\geq 0$ and $\epsi^\frac{1}{m}\geq \eta\geq \epsi^m$  it holds  that
  \begin{align*}
 \sup_{\Lambda}  \frac{\epsi^{|\ell_H|\gamma}}{M^{d -|\ell|-|\ell_H|}} \left| \left\langle B \right\rangle_{\rho^{\epsi,\Lambda,\eta,f}(t)} 
 - \right.&\left.
 \left\langle B \right\rangle_{ \Pi_{\tilde n}^{\epsi, \Lambda} } 
\right| \;\leq\\ &\leq \; C_1 \,  \frac{\epsi^{\tilde n+1} + \eta^{\tilde n+1}}{\eta^{d+1}}   \,(1+t^{d+1})\,  \|\Phi_B\|_{\zeta_{2d},2d,L_1} \\& \leq\;
   2 C_1 \,  \epsi^{n+1}   \, (1+t^{d+1})\, \|\Phi_B\|_{\zeta_{2d},2d,L_1} \,.
   \end{align*}
   According to Proposition~\ref{proposition:expand1},
   there exists $C_2\in\R$ such that 
   \[
   \sup_\Lambda  \frac{\epsi^{|\ell_H|\gamma}}{M^{d-|\ell|-|\ell_H|}} \left|  \, \left\langle B \right\rangle_{ \Pi_{\tilde n}^{\epsi, \Lambda} }  -\sum_{j=0}^n \epsi^j\,
\left\langle \mathcal{K}^{\epsi }_j (B ) \right\rangle_{P_*^\Lambda}\right| \;=\; C_2\,\epsi^{n+1}\, \|\Phi_B\|_{\zeta_{n+1},n+1,L_1}  \,,
   \]
   which proves \eqref{eq:response}.     
   \end{proof}
  
While we believe that our results are of general conceptual interest, let us  end this section with a concrete example, where linear response for gap closing potentials   can play an important role.
The understanding of the fractional quantum Hall effect \cite{L} rests on the idea that the many-body interaction itself can produce  spectral gaps at fractional fillings of bands. These gaps are much  smaller than gaps between Landau levels and not much is known about their stability under external perturbations.
While the stability of the spectral gap for small perturbations of free fermions is now well understood \cite{H,DS} (and the results of \cite{BDF} can be applied at least for sufficiently small perturbations), for general many-body ground states, like those in the fractional Hall effect, such a result seems out of reach at the moment. Moreover,    for the very small gaps at fractional filling it could  hold only for very small perturbations anyway.
However, as our results show,  the validity of linear response for small and slowly varying perturbations does not depend on the global stability of spectral gaps.

\section{The space-time adiabatic theorem} \label{sec:spacetime}

In this section we consider time-dependent Hamiltonians of the form 
\[
H(t) = H_0(t) + V_v(t) + \epsi H_1(t)\,,
\]
 where only $H_0(t)$ is assumed to have  a gapped ground state, and prove a novel adiabatic theorem of the following type. While the standard adiabatic theorem asserts that solutions of the time-dependent Schr\"odinger equation remain near gapped spectral subspaces of the instantaneous Hamiltonian in the adiabatic limit, our theorem states that solutions   remain near non-equilibrium almost-stationary states of the instantaneous Hamiltonian.
The results of this section contain  the statements of Section~\ref{sec:NEASS} as special cases.
 
\begin{definition}[\bf Spaces of time-dependent operators]\label{def:sb}
Let $\timedom\subseteq \R$ be an interval.
We say that a map $A:\timedom\to \mathcal{L}_{\zeta,n,L_\alpha}$ is   smooth and bounded whenever it is given by interactions $\Phi_A(t)$ such that
the maps
\[
\timedom \to \mathcal{A}_X^\mathfrak{N} \,,\quad t\mapsto \Phi^\Lambda_A(t,X) \,,
\]
are infinitely differentiable for all $\Lambda$ and $X\subset \Lambda$ and
\[
\sup_{t\in\timedom} \|\Phi_A^{(k)}(t)\|_{\zeta,n,L_\alpha} <\infty \qquad\mbox{for all }\quad k\in\N_0\,.
\]
Here $\Phi_A^{(k)}(t) = \{ (\tfrac{\D^k}{\D t^k}\Phi^{\Lambda }_A) (t)\}_\Lambda$
denotes the interaction defined by the term-wise derivatives. 
The corresponding spaces of smooth and bounded time-dependent interactions and operator-families are denoted by $ \mathcal{B}_{I,\zeta,n,L_\alpha}$ and $ \mathcal{L}_{I,\zeta,n,L_\alpha}$.

We say that $A:\timedom\to \mathcal{L}_{\mathcal{S},\infty,L_\alpha}$ is   smooth and bounded if for any $n\in\N_0$ there is a $\zeta_n\in\mathcal{S}$ such that $A:\timedom\to \mathcal{L}_{\zeta_n,n,L_\alpha}$ is   smooth and bounded and write $ \mathcal{L}_{I,\mathcal{S},\infty,L_\alpha}$ for the corresponding space.
\end{definition}

\noindent {\bf (A1)$_{\timedom,L^H}$  Assumptions on $H_0$.}\\[1mm] {\em
Let $\timedom\subseteq\R$ be an interval and
let $H_0:\timedom\to \mathcal{L}_{a,\infty}$ be smooth and bounded, i.e.\ $H_0 \in  \mathcal{L}_{I,a,\infty}$, with values in the self-adjoint operators.
If $L^H\in\loc$ is non-trivial, i.e.\ $|\ell_H|>0$, then we assume in addition  that $\frac{\D}{\D t} H_0 \in  \mathcal{L}_{I,a,\infty,L^H_1}$, i.e.\ that its time-derivative is $L^H_1$-localised.

Moreover, we assume that there exists $M_0\in\N$ such that for all $t\in \timedom$,  $M\geq M_0$ and corresponding $\Lambda=\Lambda(M)$  the operator $H_0^\Lambda(t)$   has a gapped part $\sigma^\Lambda_*(t)\subset\sigma(H^\Lambda_0(t))$ of its spectrum with the following properties: There exist continuous functions $f^\Lambda_\pm:\timedom\to \R$ and   constants $g>\tilde g >0$ such that $f^\Lambda_\pm(t)\in \rho(H_0^\Lambda(t))$,
$f^\Lambda_+(t)- f^\Lambda_-(t) \leq \tilde g$,
\[
[f^\Lambda_-(t), f^\Lambda_+(t)] \cap \sigma(H^\Lambda_0(t)) = \sigma^\Lambda_*(t)\,,\quad\mbox{and}\quad
{\rm dist}\left( \sigma_*^\Lambda(t) , \sigma(H_0^\Lambda(t))\setminus \sigma_*^\Lambda(t)\right) \geq g
\]
for all $t\in\timedom$. We denote by $P_*^\Lambda(t)$ the spectral projection of $H_0^\Lambda(t)$ corresponding  
to the spectrum $\sigma_*^\Lambda(t)$. }
\bigskip

Note that now also the time dependence of $H_0(t)$ is part of the perturbation and thus, possibly, a localisation of this part of the driving can be relevant.

In contrast to the previous section, we now require merely that $H_0(t)$ has a ``narrow'' gapped part $\sigma_*(t)$ of the spectrum which need not consist of a single eigenvalue. Typically we have in mind almost-degenerate ground states consisting of a narrow group of finitely many eigenvalues separated by a gap from the rest of the spectrum. The condition that the width $\tilde g$ of the spectral patch $\sigma_*(t)$ is smaller than the gap $g$ enters in the proof through the requirement that all operators appearing in the statements are quasi-local. More technically speaking, we need the range of the quasi-local inverse $\mathcal{I}_H$ of the Liouvillian have a vanishing $P_*(\cdots)P_*$-block, cf.\ Appendix~\ref{app:qli}.

\bigskip

\noindent {\bf (A2)$_{\timedom,L^H,\gamma}$  Assumptions on the perturbations.}\\[1mm]{\em
Let $H_1\in \mathcal{L}_{\timedom,\mathcal{S},\infty,L^H_1}$ be self-adjoint and   
let $v:\timedom\to \SVP$ be a time-dependent slowly varying potential such that the functions $t\mapsto v^{\epsi,\Lambda}(t,x)$ are infinitely differentiable for all $\epsi\in(0,1]$, $\Lambda$, and $x\in \Lambda$. Moreover, assume that all derivatives 
$\frac{\partial^k}{\partial t^k} v^{\epsi,\Lambda}(t,x)$ define again slowly varying potentials. Denote by $V_v$ the corresponding operator-family.

 If $L^H$ is non-trivial, then assume also  that $\frac{\D^k}{\D t^k}[H_0(t), \frac{1}{\epsi}V_v(t)]\in \mathcal{L}_{\mathcal{S},\infty,  L_{\epsi^\gamma}^H}$  uniformly in $t\in\timedom$ for all $k\in\N_0$.
}
\bigskip

Recall that  the time-independent  NEASS  of the previous section was given by conjugation of the spectral projection $P_*$ of $H_0$ by the time-independent unitary $\E^{\I  \epsi  S } $. In the time-dependent setting the spectral projection $P_*(t)$, the analog $\Pi_{*,n}^\delta(t) = \E^{\I  \epsi  S^{ \delta}_n(t)}   \,  P_* (t) \,\E^{-\I  \epsi  S^{ \delta}_n(t)}$  of the NEASS, and the unitary $ \E^{\I  \epsi  S^{ \delta}_n(t)}  $ connecting them, all depend on time. 
A special case of our adiabatic theorem  is as follows.  Assume that the system starts at time $t_0$ in the state $\rho(t_0) = \Pi_{*,n}^\delta(t_0)$, i.e.\ the initial state is the projection onto the full superadiabatic subspace. Then the 
solution of the time-dependent Schr\"odinger equation 
\begin{equation}\label{Schoedinger1}
\I \epsi\delta \tfrac{\D}{\D t} \rho(t) = [H(t) , \rho(t)]  
\end{equation}
remains close to   $\Pi_{*,n}^\delta(t)$ up to small errors in $\epsi$ and for long times.
Note that   we replaced the adiabatic parameter $\eta$ of the previous sections by $\eta= \epsi\delta$ with  $\delta\in[0,\frac{1}{\epsi}]$ in order to facilitate the notation in asymptotic expansions in powers of $\epsi$. In particular, $\Pi_{*,n}^0(  t)$ is the static NEASS of $H(  t)$ discussed in the previous section.
We will show that at times $\tilde t\in \timedom$ where the first $n$ time-derivatives of $H$ vanish, we have indeed that $\Pi_{*,n}^\delta(\tilde t)$  equals the static NEASS  $\Pi_{*,n}^0(\tilde t)$. If, in addition, also the perturbation vanishes at time $\tilde t$, i.e.\ $H(\tilde t) = H_0 (\tilde t)$, then $\Pi_{*,n}^0(\tilde t)= P_*(\tilde t)$.

However, the general statement concerns an adiabatic approximation   to solutions of \eqref{Schoedinger1}
with initial condition given by an arbitrary state   
\[
\Pi_{n}^\delta(t_0) := \E^{\I  \epsi  S^{ \delta}_n(t_0)}   \,  \rho_0   \,\E^{-\I  \epsi  S^{ \delta}_n(t_0)}
\]
in the range of $\Pi_{*,n}^\delta(t_0)$, i.e.\ $\rho_0$ is a state satisfying  $\rho_0 = P_*(t_0)\rho_0P_*(t_0)$. Then the adiabatic approximation must also track the evolution within the spectral subspace and the statement becomes  that 
\[
\Pi_{n}^\delta(t ) := \E^{\I  \epsi  S^{ \delta}_n(t )}   \,  P_n^\delta (t ) \,\E^{-\I  \epsi  S^{ \delta}_n(t )}
\]
is close to the solution of \eqref{Schoedinger1} with initial condition $\rho(t_0) = \Pi_{ n}^\delta(t_0)$, if the state $P_n^\delta (t )$ satisfies the effective equation 
\begin{equation}\label{effectiveTrans0}
 \I \epsi\delta \tfrac{\D}{\D t} P_n^{\delta}(t)  =   \big[ \epsi \delta K (t) +\sum_{\mu=0}^n \epsi^\mu h_\mu^{ \delta}(t) ,P_n^{ \delta}(t)\big]  
 \quad\mbox{ with } \quad P_n^\delta(t_0) = \rho_0\,.
\end{equation}
 Here $K(t):=   \mathcal{I}_{H_0(t),g,\tilde g}(\dot H(t) ) $ (cf.\ Appendix~\ref{app:qli}) is a self-adjoint quasi-local generator   of the parallel transport within the vector-bundle $\Xi_{*,\timedom}$ over $\timedom$ defined by $t\mapsto P_*(t)$, i.e.\ its off-diagonal terms are
 \[
 P_*(t) K(t) P_*(t)^\perp = \I P_*(t) \dot P_*(t)  \quad\mbox{ and } \quad P_*(t)^\perp K(t) P_*(t)  = -  \I \dot P_*(t)P_*(t) \,,
 \]
 and $P_*(t) K(t)P_*(t) = 0$.
 The coefficients $h^\delta_\mu(t)$ of the effective Hamiltonian generating the dynamics within $\Xi_{*,\timedom}$ are quasi-local operators that commute with $P_*(t)$. Thus, for $\rho_0 = P_*(t_0)$  the unique solution of \eqref{effectiveTrans0} is $P_*(t)$ and the effective Hamiltonian $\sum_{\mu=0}^n \epsi^\mu h_\mu^{ \delta}(t)$ is irrelevant.
\medskip
 
 Our main results are now stated in three steps. Proposition~\ref{proposition:Adi} contains the detailed properties of the super-adiabatic state that solves the time-dependent Schr\"odinger equation up to a small and quasi-local residual term.
The adiabatic theorem,  Theorem~\ref{theorem:Adi}, asserts that the super-adiabatic state and the actual solution of the time-dependent Schr\"odinger equation are close in the sense that they   agree on expectations for quasi-local observables up to terms asymptotically smaller than any power of $\epsi $. Finally, in Proposition~\ref{proposition:expand} we determine the asymptotic expansion of the NEASS.

\begin{proposition}[\bf The space-time adiabatic expansion]\label{proposition:Adi}
For some $L^H\in\loc$ and $\gamma\in\{0,1\}$ let the Hamiltonian $H_0$ satisfy Assumption (A1)$_{\timedom,L^H}$  and let $H= H_0 + V_v + \epsi H_1$ with $V_v$ and $H_1$ satisfying (A2)$_{\timedom,L^H,\gamma}$.  There exist two sequences   $(h^{ \delta}_\mu)_{\mu\in\N_0}$ and $(A^{ \delta}_\mu)_{\mu\in\N}$ of quasi-local $L^H_{\epsi^\gamma}$-localised operator families, 
such that for any $n\in\N_0$, $t_0\in I$,  and any family of initial states $\rho_0^\Lambda$ with $\rho_0^\Lambda = P_*^\Lambda(t_0) \rho_0^\Lambda P_*^\Lambda(t_0)$,   the state
\[
  \Pi_n^{\epsi,\Lambda,\delta}(t) := \E^{\I  \epsi  S^{\epsi,\Lambda,\delta}_n(t)}   \,  P_n^{\epsi,\Lambda,\delta}(t) \,\E^{-\I  \epsi  S^{\epsi,\Lambda,\delta}_n(t)}\,,
  \]
   solves 
\[
\I\epsi \delta\tfrac{\D}{\D t}    \Pi_n^{\epsi,\Lambda,\delta}(t) = \left[ H^{\epsi,\Lambda}(t),  \Pi_n^{\epsi,\Lambda,\delta}(t)\right] + \epsi^{n+1}(1+\delta^{n+1})\left[R_{n}^{\epsi,\Lambda,\delta}(t),  \Pi_n^{\epsi,\Lambda,\delta}(t)\right]\,, 
\]
where
 \[
   S^{\epsi,\Lambda,\delta}_n(t) := \sum_{\mu = 1}^{n} \epsi^{\mu-1}  A^{\epsi,\Lambda,\delta}_\mu(t)  \,,
 \]
  $P_n^{\epsi,\Lambda,\delta}(t)$ is the solution of the effective equation
  \begin{equation}\label{effectiveTrans}
   \I \epsi\delta \tfrac{\D}{\D t} P_n^{\epsi,\Lambda,\delta}(t)  =   \Big[ \epsi \delta K^{ \Lambda}(t) +\sum_{\mu=0}^n \epsi^\mu h_\mu^{\epsi,\Lambda,\delta}(t) ,P_n^{\epsi,\Lambda,\delta}(t)\Big]  \quad \mbox{with  $P_n^{\epsi,\Lambda,\delta}(t_0)=  \rho_0^\Lambda$}  \,,
   \end{equation}
and $R_{n}^{ \delta}\in  \mathcal{L}_{I,\mathcal{S},\infty,L^H_{\epsi^\gamma}}$ uniformly for $ \delta\in[0,\frac{1}{\epsi}]$.
\medskip

\noindent Further properties of $(h^{ \delta}_\mu)_{\mu\in\N_0}$ and $(A^{ \delta}_\mu)_{\mu\in\N}$ are:
\begin{enumerate}\rom
\item  $[h^{\epsi,\Lambda,\delta}_\mu(t), P_*^\Lambda(t)] = 0$ for all $t\in\timedom$, $\mu\in\N_0$, and $\delta\in[0,\frac{1}{\epsi}]$.
\item 
All $A^{ \delta}_\mu$ are polynomials of   degree $\mu$ in $\delta$ with coefficients in $  \mathcal{L}_{I,\mathcal{S},\infty,L^H_{\epsi^\gamma}}$
and the constant coefficients $A^{ 0}_\mu(t)$
are those of the  almost-stationary state $ \Pi_n^{\epsi,\Lambda,0}(t)$ for the Hamiltonian $H^{\epsi,\Lambda}(t)$.
\item  
If for some $\tilde t \in\timedom$ it holds that $\frac{\D^k}{\D t^k} H (\tilde t )=0$ for $k=1,\ldots, n$, then
 $A^{ \delta}_\mu(\tilde t ) = A^{ 0}_\mu(\tilde t )$ and thus $ \Pi_n^{\epsi,\Lambda,\delta}(\tilde t )= \Pi_n^{\epsi,\Lambda,0}(\tilde t )$.
\end{enumerate}

\end{proposition}
\medskip

The proof of Proposition~\ref{proposition:Adi} is given in Section~\ref{sec:stae}.
Note that the generator $S_n(t)$ of the super-adiabatic transformation at time $t$ turns out to depend only on the Hamiltonian $H$ and its time-derivatives at   time $t$. As a consequence, if $\rho_0 = P_*(t_0)$, i.e.\ also $P_n^\delta(t) = P_*(t)$ depends only on $H_0$ at time $t$,
then the super-adiabatic projection $\Pi_n^\delta(t)$ itself depends only on the Hamiltonian $H$ and its time-derivatives at   time $t$.
In this case the NEASS   has no ``memory'' of the switching process.

We now state the general space-time adiabatic theorem that is proved in Section~\ref{sec:proofAdiabatic}.
 
 \begin{theorem}[\bf The space-time adiabatic theorem]\label{theorem:Adi}
For some $L^H\in\loc$ and $\gamma\in\{0,1\}$ let the Hamiltonian $H_0$ satisfy Assumption (A1)$_{\timedom,L^H}$  and let $H = H_0 + V_v + \epsi H_1$ with $V_v$ and $H_1$ satisfying (A2)$_{\timedom,L^H,\gamma}$.
  Let $\Pi_n^{\epsi,\Lambda,\delta}(t)$ be the super-adiabatic state of order $n>d$ constructed in Proposition~\ref{proposition:Adi} and $\rho^{\epsi,\Lambda,\delta}(t)$ the solution of the Schr\"odinger equation
  \[
\I\epsi \delta\tfrac{\D}{\D t}    \rho^{\epsi,\Lambda,\delta}(t) = [ H^{\epsi,\Lambda}(t) ,  \rho^{\epsi,\Lambda,\delta}(t)] \qquad 
  \mbox{with}\qquad  \rho^{\epsi,\Lambda,\delta}(t_0) = \Pi_n^{\epsi,\Lambda,\delta}(t_0)\,.
  \]
  Then for any $\zeta\in \mathcal{S}$ and $L\in\loc$ with $\ell\cdot\ell_H=0$ there exists a constant $C$ such that for any $B\in \mathcal{L}_{\zeta,2d,L_1}$, $\delta\in (0,\frac{1}{\epsi}]$,   and all   $t\in\timedom$
  \begin{align}\label{AdiState}\nonumber
\sup_{\Lambda(M):M\geq M_0} \;  \tfrac{\epsi^{|\ell_H|\gamma}}{M^{d -|\ell|-|\ell_H|}} &\left|  \tr   \left(\rho^{\epsi,\Lambda,\delta}(t)  B^\Lambda \right)- \tr  \left(  \Pi^{\epsi,\Lambda,\delta}_n(t)  B^\Lambda\right)\right| \\ & \leq \; \kappa  \,C \, \frac{\epsi^{n+1} + (\epsi\delta)^{n+1}}{(\epsi\delta)^{d+1}}   \, |t-t_0|(1+|t-t_0|^d) \,\|\Phi_B\|_{\zeta,2d,L_1}\,,
\end{align}
where $\dege := \sup_{\Lambda} \tr(\rho_0^\Lambda)$.
 \end{theorem}
 \medskip
   
 For simplicity, we state and prove the explicit expansion of the super-adiabatic state only for the
case that $\rho_0 = P_*(t_0)$ and thus   $P_n^\delta(t) = P_*(t)$. The general case could be handled as in Theorem~3.3 in \cite{MT}. The proof of the following statement is given in Section~\ref{proposition:expand:proof}.

     \begin{proposition}[\bf Asymptotic expansion of the time-dependent NEASS]\label{proposition:expand}
Under the assumptions of Theorem~\ref{theorem:Adi}
and if $\rho_0 = P_*(t_0)$, there exist time-dependent  linear maps $\mathcal{K}^{\epsi,\Lambda,\delta}_j: \mathcal{A}_\Lambda^\mathfrak{N} \to  \mathcal{A}_\Lambda^\mathfrak{N}$, $j\in\N$, such that for any $k\in\N_0$, $\zeta\in \mathcal{S}$, and $L\in\loc$ with $\ell\cdot\ell_H=0$ there is a constant $C$ such that
  for any $B\in \mathcal{L}_{\zeta,k+1,L_1}$ it holds that 
\begin{eqnarray*}\lefteqn{\hspace{-5mm}\nonumber
\sup_{t\in I} \sup_{\Lambda(M):M\geq M_0}\,  \tfrac{\epsi^{|\ell_H|\gamma}}{M^{d-|\ell|-|\ell_H|}} \left|  \,\tr\left( \Pi^{\epsi,\Lambda,\delta}_k(t) B^\Lambda\right) -   \sum_{j=0}^k \epsi^j\,\tr\left( P_*^\Lambda(t) \,\mathcal{K}^{\epsi,\Lambda,\delta}_j(t) [B^\Lambda] \right)\right| }\\ &&\hspace{55mm} \leq\;C\,\epsi^{k+1}(1+\delta^{k+1})\, \|\Phi_B\|_{\zeta,k+1,L_1} \,,
\end{eqnarray*}
with
\[
\sup_{t\in I}\sup_{\Lambda(M):M\geq M_0}\; \sup_{\epsi\in(0,1]}  \tfrac{\epsi^{|\ell_H|\gamma}}{M^{d-|\ell|-|\ell_H|}} \,\tr\left( P_*^\Lambda(t) \,\mathcal{K}^{\epsi,\Lambda,\delta}_j(t) [B^\Lambda] \right)<C(1+\delta^j)
\]
for all $j\leq k$.  

Each map $\mathcal{K}^{\epsi,\Lambda,\delta}_j$ is given by a finite sum of nested commutators with the operators $A^\delta_1,\ldots, A^\delta_j $  constructed in Theorem~\ref{SpaceAdiabaticThm}.
Explicitly, the first terms are 
 \[
  \mathcal{K}_0^{\epsi,\Lambda,\delta} (t) = {\rm Id} \,,\quad  \mathcal{K}_1^{\epsi,\Lambda,\delta}(t) [ \cdot] = - \I\, [A_1^{\epsi,\Lambda,\delta}(t) ,\,\cdot\,] \,, 
\]
\[
\mathcal{K}_2^{\epsi,\Lambda,\delta} (t) [\cdot] = -\I\, [A_2^{\epsi,\Lambda,\delta} (t),\,\cdot\,]
- \tfrac{1}{2}[A_1^{\epsi,\Lambda,\delta}(t) ,[A_1^{\epsi,\Lambda,\delta} (t),\,\cdot\, ]]\,.
\]
If $\sigma_*(t) = \{E_*(t)\}$ is a single eigenvalue, then
 \begin{eqnarray*}
\tr\left( P_*^\Lambda(t) \,\mathcal{K}^{\epsi,\Lambda,\delta}_1(t)\left[ B^\Lambda\right] \right) &=&   \I\delta\,\tr\left(  P_*^\Lambda(t) \left[\left( \dot P_*^\Lambda (t) R^\Lambda_0(t) + R^\Lambda_0(t)\dot P_*^\Lambda (t) \right),B^\Lambda\right]\right)\\&& +\;   
\tr\left( P_*^\Lambda(t) \,\left[  \left[R^\Lambda_0(t),\tfrac{1}{\epsi}V^{\epsi,\Lambda}_v(t)+H_1^{\epsi,\Lambda}(t)\right], B^\Lambda 
\right]\right)
 \end{eqnarray*}
where 
 $R^\Lambda_0(t) := (H_0^\Lambda(t) - E_*^\Lambda(t))^{-1} ( 1- P_*^\Lambda(t))$ denotes the reduced resolvent of $H_0(t)$. In the time-independent case, the second order term is given by \eqref{K2}.
\end{proposition}

\section{Proofs of the main results} 
 
\subsection{The space-time adiabatic expansion} \label{sec:stae}

 \begin{proof}[Proof of Proposition~\ref{proposition:Adi}.]
To simplify the notation and to improve readability, we   drop  all super- and subscripts that are not necessary to distinguish different objects, as well as the dependence on time, $\epsi$, and $\Lambda$.

Taking a time derivative of  $\Pi  = U   P_n U^* $,  where $U:= \E^{\I \epsi S}$, and using \eqref{effectiveTrans} yields
\begin{align*}
\I\delta \epsi  \frac{\D}{\D t}  \Pi  &= \I  \delta \epsi \dot U  P_n U^* + \I  \delta\epsi U  P_n \dot U ^* +   U  [ \epsi \delta K +h , P_n ] U ^*\nonumber\\
&= [ H,  \Pi ] + U  \left[ \I\epsi\delta U ^*\dot U  +   ( \epsi \delta K+h)     - U ^*HU ,  P_n \right] U ^*\,,
\end{align*}
where for the second equality we employed the identities  $U \dot U ^* = -\dot U  U ^*$   and $  [ H,  \Pi ] = U   [U ^*HU , P_n ] U ^* $.
We thus need to choose the coefficients $A_\mu$ entering in the definition of $U $ in such a way that the remainder term satisfies
\begin{align}\label{Rdef}\nonumber
U  \left[ \I\epsi  \delta U ^* \dot U +   \epsi \delta K+h    - U ^*H_0U\right.&\left.   - \,U ^*(V_v +\epsi H_1)U,P_n  \right] U ^* =\\
&=\;
\epsi^{n+1}(1+\delta^{n+1}) U [ U^* R_n U ,P_n] U^*  \,.
\end{align}
Expanding $U ^*H_0U $ yields
\begin{align*}\nonumber
U ^*H_0 U &=  \E^{-\I  \epsi  S  } H_0\, \E^{\I  \epsi  S } =  \sum_{k=0}^n \frac{ \epsi ^k}{k!} \,\ad_{S }^k(H_0)
+  \frac{ \epsi ^{n+1}}{(n+1)!}  \E^{-\I \tilde\gamma S  }  \, \ad_{S  }^{n+1} (H_0) \,\E^{\I  \tilde \gamma S  }\\
&=: \sum_{\mu=0}^n  \epsi^\mu H_{0,\mu} +\epsi^{n+1}H_{0,n+1} \,,
\end{align*}
where  $\tilde \gamma\in [0,\epsi]$, each $H_{0,\mu}$, $\mu=1,\ldots,n$, is defined as  the sum of those terms in the series that carry a factor $\epsi^\mu$, and  $\epsi^{n+1}H_{0,n+1}$ is the sum of all remaining terms. Above we use the notation $\ad_A(B):= - \I [A,B]$ and 
$\ad_{S }^k(H_0) $ for the nested commutator $[-\I S  ,[\cdots, [-\I S  ,[-\I S  ,H_0 ]]\cdots]]$, where $-\I S  $ appears $k$ times. 
Clearly
\[
H_{0,\mu} =-  \ad_{H_0} (A_\mu) + L_\mu\,,
\]
where $L_\mu$ contains a finite number of iterated commutators of the operators $A_\nu$, $\nu<\mu$, with $H_0$.
Explicitly, the first orders are 
\[
H_{0,0} = H_0\,, \qquad H_{0,1} = -\ad_{H_0} (A_1)\,,\qquad H_{0,2} =- \ad_{H_0} (A_2) - \tfrac{1}{2} [A_1,[A_1,H_0]]\,.
\]
We obtain a similar expansion for 
\[
U ^*V_v  U   =: \sum_{\mu=0}^{n-1}  \epsi^\mu V_\mu +\epsi^{n+1}V_n \,,
\qquad\mbox{and}\qquad
\epsi \,U ^*H_1  U   =: \sum_{\mu=0}^{n-1}  \epsi^{\mu+1} H_{1,\mu} +\epsi^{n+1}H_{1,n} \,,
\]
with 
\[
V_0 = V_v\,, \qquad V_1 =  -\ad_{V_v} (A_1)\,,\qquad V_2 =  -\ad_{V_v} (A_2) - \tfrac{1}{2} [A_1,[A_1,V_v ]]\,.
\]
As a crucial observation is now, that, according to Lemma~\ref{lemma:Vcomm},    if the operators $A_\nu$ for $\nu<\mu$ are  polynomials in $\delta$ of degree $\nu$ with coefficients in $\mathcal{L}_{I,\mathcal{S},\infty,L^H_{\epsi^\gamma} }$, then the operator $V_\mu$ for $\mu\geq 1$  is   of the form
\[
V_\mu = \epsi \tilde V_\mu \qquad   \mbox{ with $\tilde V_\mu$ a polynomial of degree $\mu$ with coefficients in  $\mathcal{L}_{\mathcal{S},\infty,L^H_{\epsi^\gamma} }$} \,.
\]
We obtain the expansion of  $U ^*\dot U $   by expanding the integrand of Duhamel's formula 
\[
\I \epsi U^*\dot U = -\epsi^2 \int_0^1 \E^{-\I \lambda \epsi S } \,\dot S   \,\E^{\I \lambda \epsi S }\,\D \lambda 
\]
as a Taylor polynomial  of nested commutators  and then integrating term by term, 
\begin{align*}
\I \epsi U ^*\dot U &= -\epsi^2    \sum_{k=0}^{n-2} \frac{ \epsi   ^k}{(k+1)!} \ad_{S }^k(\dot S)
- \frac{  \epsi ^{n+1}}{(n-1)!} \int_0^1 \E^{-\I \lambda \tilde\gamma S  }   \ad_{S  }^{n-1} (\dot S ) \E^{\I \lambda \tilde \gamma S  }\,\D \lambda
\\
& = \sum_{\mu=1}^{n } \epsi^\mu Q_\mu + \epsi^{n+1} Q_{n+1} \,.
\end{align*}
Here, again, $Q_\mu$ collects all terms in the sum proportional to $\epsi^\mu$. Note that $Q_\mu$ is a finite sum of iterated commutators of the operators $A_\nu$ and $\dot A_\nu$ for $\nu<\mu$, the    first terms being
\[
Q_1 = 0\,,\qquad  Q_2 = -\dot A_1 \,,\qquad Q_3 = - \dot A_2 + \tfrac{\I}{2} [ A_1,\dot A_1]\,.
\]
Inserting the expansions into \eqref{Rdef} leaves us with
\begin{align*} 
 \I\epsi \delta U ^* \dot U +  \epsi \delta K+h   & - U ^*H_0U   -   U ^*(V_v +\epsi H_1)U  \;=\\[2mm]
 &=\;\epsi (  \delta   K  + \tfrac{1}{\epsi} (h_0 - H_{0} -V_v) + h_1 -H_{0,1}  - H_{1,0}) \\
 &\; \;+\;\sum_{\mu=2}^n \epsi^\mu (\delta Q_\mu  +   h_{\mu}  -H_{0,\mu} - \tilde V_{\mu-1} - H_{1,\mu-1})  \\
 & \;\;+\; \epsi^{n+1} (Q_{n+1}  - H_{0,n+1}  - \tilde V_n - H_{1,n}  )\,.
\end{align*}
It remains to determine $A_1,\ldots, A_n$ inductively such that
\begin{equation}\label{Solve1}
[ \ad_{H_0} (A_1)+ \delta  K  + \tfrac{1}{\epsi}( h_0 -H_0 -V_v)+ h_1  -    H_{1,0}, P_n] \stackrel{!}{=} 0
\end{equation}
and
\begin{equation}\label{Solve2}
 [ \ad_{H_0} (A_\mu)+ \delta Q_\mu  +   h_\mu -L_\mu - \tilde V_{\mu-1} - H_{1,\mu-1}, P_n]
 \stackrel{!}{=} 0
\end{equation}
for all $\mu = 1,\ldots, n$. Equations~\eqref{Solve1} and \eqref{Solve2} can be solved for $A_\mu$, since $L_\mu$,  $Q_\mu$, $\tilde V_{\mu-1}$,   and $H_{1,\mu-1}$  depend only on $A_\nu$ for $\nu<\mu$. 

First, note that in the block decomposition with respect to $P_*$ (the full spectral projection), 
the $P_*^\perp (\cdots) P_*^\perp$-blocks  on both sides  of \eqref{Solve1} resp.\ \eqref{Solve2}   vanish identically, independently of the choice of $A_\mu$. 
Second, the off-diagonal blocks of \eqref{Solve1} resp.\ \eqref{Solve2} determine~$A_\mu$ uniquely.
Third,   we will  choose  the coefficient $h_\mu$ of the effective Hamiltonian in such a way that also the $P_* (\cdots) P_* $-block of   \eqref{Solve1} resp.\ \eqref{Solve2}  is zero.

We first consider $\mu=1$, which is  special, as we will see, for several reasons.
According to Lemma~\ref{lemma:I1}, the unique solution of the off-diagonal part  of  \eqref{Solve1} is given by   the off-diagonal part of 
   \[
A_1  = - \delta\, \mathcal{I}_{H_0,g,\tilde g} (  K ) + \mathcal{I}_{H_0,g,\tilde g} (  \tfrac{1}{\epsi}V_v) + \mathcal{I}_{H_0,g,\tilde g} (  H_{1,0}) \,.
\]
By assumption we have that $H_{1,0}=H_1 \in \mathcal{L}_{I,\mathcal{S},\infty,L^H_1}$. 
Lemma~\ref{lemma:I2} implies $K = \mathcal{I}_{H_0,g,\tilde g}(\dot H ) \in \mathcal{L}_{\mathcal{S},\infty,L^H_{\epsi^\gamma} }$ and,  since, according to Lemma~\ref{lemma:Vcomm}, $H_{0,v} := \left[ H_0, \tfrac{1}{\epsi}V_v\right] \in \mathcal{L}_{I,\mathcal{S},\infty,L^H_{\epsi^\gamma }}$, also  
$H_{0,v}\in \mathcal{L}_{I,\mathcal{S},\infty,L^H_{\epsi^\gamma }}$. Thus, again by Lemma~\ref{lemma:I2},  $A_1$ 
is  a polynomial  in $\delta$ of degree one with coefficients in 
 $\mathcal{L}_{I,\mathcal{S},\infty,L^H_{\epsi^\gamma}}$ and the constant term 
  $ A_1^{ 0} =  \mathcal{I}_{H_0,g,\tilde g} (  \tfrac{1}{\epsi}V_v+  H_{1,0})$ is the coefficient of the almost-stationary state.\smallskip

We still need to deal with the $P_* (\cdots) P_* $-block  of \eqref{Solve1}. 
Defining
\begin{equation}\label{P0hP0}
P_* h_0 P_*  :=  P_*  (H_0 +V_v)  P_* \qquad\mbox{ and }\qquad P_* h_1P_*  :=   P_* ( H_{1,0}+ \ad_{H_0} (A_1))   P_*\,,
\end{equation}
we have   that the $P_* (\cdots) P_* $-block of   \eqref{Solve1} is
\[
P_*  \left[   \tfrac{1}{\epsi}(h_0 - H_0 -V_v) + (h_1 -  H_{1,0}+ \ad_{H_0} (A_1)), P_n\right] P_* = 0\,.
\]
 To ensure that $h_0$ and $h_1$ are indeed quasi-local operators, we need to add also a $P_*^\perp (\cdots) P_*^\perp $-block
to both of them. Using Corollary~\ref{diagcor}, we find that 
\[
h_0 := H_0 +  V_v -  \mathcal{I}_{H_0 ,g,\tilde g}( \ad_{H_0(t)}(V_v)) 
\]
and
\[
h_1 :=     H_{1,0}- \ad_{H_0} (A_1)     - \mathcal{I}_{H_0 ,g,\tilde g}( \ad_{H_0 }( H_{1,0}- \ad_{H_0} (A_1)   ))
\]
are indeed quasi-local and diagonal,
   and that this new definition is compatible with~\eqref{P0hP0}.  
   
For $\mu>1$ the off-diagonal part of \eqref{Solve2} is again solved by 
\[
A_\mu = \mathcal{I}_{H_0,g,\tilde g}( - \delta Q_\mu + L_\mu + \tilde V_{\mu-1} + H_{1,\mu-1})\,.
\]
Assume as induction hypothesis that for all $\nu<\mu$ and $r\in \N_0$ we already showed that
 $ \frac{\D^r}{\D t^r}A^\delta_\nu$  are polynomials in $\delta$ of degree $\nu$ with coefficients in 
  $  \mathcal{L}_{I,\mathcal{S},\infty,L^H_{\epsi^\gamma}}$. 
Then we can conclude from Lemma~\ref{lemma:ads}  and the respective definitions of $L_\mu$, $Q_\mu$, $\tilde V_{\mu-1}$, and $H_{1,\mu-1}$  that also 
 $ \frac{\D^r}{\D t^r}L_\mu$, $ \frac{\D^r}{\D t^r}Q_\mu$, $ \frac{\D^r}{\D t^r}\tilde V_{\mu-1}$, and $ \frac{\D^r}{\D t^r}H_{1,\mu-1}$ for $r\in\N_0 $ are all polynomials  of order at most $\mu$ in $\delta$ with coefficients 
 in $\mathcal{L}_{I,\mathcal{S},\infty,L^H_{\epsi^\gamma}}$.
  Thus with Lemma~\ref{lemma:I2}  also  $ \frac{\D^r}{\D t^r}A_\mu$ is a  polynomial  of order   $\mu$ in $\delta$ with coefficients in  
 $  \mathcal{L}_{I,\mathcal{S},\infty,L^H_{\epsi^\gamma}}$.
 In summary  we conclude that for $\delta\leq \frac{1}{\epsi}$ we have that $\epsi S = \sum_{\mu=1}^n \epsi^{\mu} A_\mu \in\mathcal{L}_{I,\mathcal{S},\infty,L^H_{\epsi^\gamma}}$.

Finally, the  $P_* (\cdots) P_* $-block of \eqref{Solve2} is canceled by choosing
\begin{align*}
h_\mu = &   -\,   \delta Q_\mu  + L_\mu + \tilde V_{\mu-1} + H_{1,\mu-1}   - \ad_{H_0}( A_\mu)\\
&-  \mathcal{I}_{H_0,g,\tilde g}( \ad_{H_0 }( 
- \delta Q_\mu  + L_\mu + \tilde V_{\mu-1} + H_{1,\mu-1}   -  \ad_{H_0}( A_\mu)
) )\,,
\end{align*}
which is again a diagonal quasi-local operator by Corollary~\ref{diagcor}.

For the remainder term $ R_n = (1+\delta^{n+1})^{-1} U ( Q_{n+1}  - H_{0,n+1} - \tilde V_n -H_{1,n}  ) U^*$ 
 first note that the terms $Q_{n+1}$,   $H_{0,n+1}$, $\tilde V_n$, and $H_{1,n}$ are all of the form of a sum of multi-commutators of operators that are polynomials in $\epsi$ and $(\epsi\delta)$ of total degree at least $n+1$,  conjugated by unitaries of the form $\E^{\I  \tilde \gamma S}$ with $\tilde\gamma\leq \epsi$.  The multi-commutators   can be estimated by Lemma~\ref{lemma:ads}, and the conjugations by Lemma~\ref{lemma:transform}. The conjugation with $U $ that leads to $R_n $ is again estimated by  Lemma~\ref{lemma:transform}.

Finally note that if   $\frac{\D^k}{\D t^k}H (t_0)=0$ for all $k=1,\ldots,n$, then $K(t_0)=0$ and $Q_\mu(t_0) = 0$ for all $\mu=1,\ldots,n$, and thus
 $A^{ \delta}_\mu(t_0)=A^{ 0}_\mu(t_0)$ for all $\mu=1,\ldots,n$.      
\end{proof}

\subsection{Proof of the adiabatic theorem} \label{sec:proofAdiabatic}

  \begin{proof}[Proof of Theorem~\ref{theorem:Adi}.]
To show \eqref{AdiState},
we first observe that  
\[
\tfrac{\epsi^{|\ell_H|\gamma}}{M^{d -|\ell|-|\ell_H|}}    \tr  (\rho^{\epsi,\Lambda,\delta}(t)  B^{\Lambda} )= \tfrac{\epsi^{|\ell_H|\gamma}}{M^{d -|\ell|-|\ell_H|}}   \sum_{X \subset \Lambda}     \tr  \left(\rho^{\epsi,\Lambda,\delta}(t)  \Phi^\Lambda_B(X) \right)  \,. 
\]
We freely use the notation from Proposition~\ref{proposition:Adi} and its proof provided in the last section.
Let  $U^{\epsi,\Lambda,\delta}_n(t,s)$ and  $V^{\epsi,\Lambda,\delta}_n(t,s)$ be the solutions of 
\[
\I\epsi \delta \tfrac{\D}{\D t} U^{\epsi,\Lambda,\delta}_n(t,s) =  H^{\epsi,\Lambda}(t)    \,U^{\epsi,\Lambda,\delta}_n(t,s)  \]
 with $U^{\epsi,\Lambda,\delta}_n(s,s) = {\bf 1}$ for $ t,s\in \R$ and
\[
\I\epsi \delta \tfrac{\D}{\D t} V^{\epsi,\Lambda,\delta}_n(t,s) = \left( H^{\epsi,\Lambda}(t)  + \epsi^{n+1}(1+\delta^{n+1}) R^{\epsi,\Lambda,\delta}_n (t)\right) \,V^{\epsi,\Lambda,\delta}_n(t,s)
  \]
 with $V^{\epsi,\Lambda,\delta}_n(s,s) = {\bf 1}$ for $ t,s\in \R$, respectively, 
and thus 
\[
\rho^{\epsi,\Lambda,\delta}(t) =U^{\epsi,\Lambda,\delta}_n(t,t_0)\,   \Pi^{\epsi,\Lambda,\delta}_n(t_0)\,U^{\epsi,\Lambda,\delta}_n(t_0,t)
\]
and
\[
  \Pi^{\epsi,\Lambda,\delta}_n(t) = V^{\epsi,\Lambda,\delta}_n(t,t_0)\,    \Pi^{\epsi,\Lambda,\delta}_n(t_0)\, V^{\epsi,\Lambda,\delta}_n(t_0,t)\,.
\]
Then for any local observable $O\in \mathcal{A}_X^+$ we have that
\begin{align*} 
&\tr \left( \left(    \Pi^{\epsi,\Lambda,\delta}_n(t) -\rho^{\epsi,\Lambda,\delta}(t)  \right) O\right)\\
&= \tr \left( \left( V^{\epsi,\Lambda,\delta}_n(t,t_0) \Pi^{\epsi,\Lambda,\delta}_n(t_0)V^{\epsi,\Lambda,\delta}_n(t_0,t)- U^{\epsi,\Lambda,\delta}(t,t_0) \Pi^{\epsi,\Lambda,\delta}_n(t_0)U^{\epsi,\Lambda,\delta}(t_0,t) \right)O\right)\\
&=    \int_{t_0}^t\D s\, \tr \Big(  \tfrac{\D}{\D s} \Big( U^{\epsi,\Lambda,\delta}(t_0,s)V^{\epsi,\Lambda,\delta}_n(s,t_0) \Pi^{\epsi,\Lambda,\delta}_n(t_0)V^{\epsi,\Lambda,\delta}_n(t_0,s)U^{\epsi,\Lambda,\delta}(s,t_0)\Big)   \\ &  \hspace{7.5cm}\times U^{\epsi,\Lambda,\delta}(t_0,t)\,O\, U^{\epsi,\Lambda,\delta}(t,t_0)
\Big)\\
&= 
\frac{\I}{\epsi\delta}  \int_{t_0}^t\D s\, \tr \Big(   \left[ \epsi^{n+1}(1+\delta^{n+1}) R_n^{\epsi,\Lambda,\delta}(s),V^{\epsi,\Lambda,\delta}_n(s,t_0) \Pi^{\epsi,\Lambda,\delta}_n(t_0)V^{\epsi,\Lambda,\delta}_n(t_0,s)\right]    \\ &   \hspace{7.5cm}\times  U^{\epsi,\Lambda,\delta}(s,t)\,O\, U^{\epsi,\Lambda,\delta}(t,s)
\Big)\\
&= 
- \I \,\frac{\epsi^{n+1}(1+\delta^{n+1})}{\epsi\delta}  \int_{t_0}^t\D s\, \tr \left(   V^{\epsi,\Lambda,\delta}_n(s,t_0) \Pi^{\epsi,\Lambda,\delta}_n(t_0)V^{\epsi,\Lambda,\delta}_n(t_0,s) \right.\\
&\left.\hspace{5.9cm}\times \left[ R_n^{\epsi,\Lambda,\delta}(s), U^{\epsi,\Lambda,\delta}(s,t)\,O\, U^{\epsi,\Lambda,\delta}(t,s)\right] 
\right).
\end{align*}
Since $ \tr(\Pi^{\epsi,\Lambda,\delta}_n(t_0)) = \tr(\rho^{\epsi,\Lambda,\delta}(t_0)) = \tr(\rho_0^\Lambda)\leq \kappa$,  with $  \sup_{t\in\R} \|\Phi_{R_n}\|_{\zeta_0,0,L^H_{\epsi^\gamma} } \leq C_{R_n}  $ it holds that 
\begin{align}\label{localEst}\nonumber 
\Big| & \tr\left( ( \rho^{\epsi,\Lambda,\delta}(t) -  \; \Pi^{\epsi,\Lambda,\delta}_n(t)  )\, O\right)\Big| \; \leq\\
& \leq \;|t-t_0|  \,\frac{\epsi^{n+1}(1+\delta^{n+1})}{\epsi\delta} \kappa \sup_{s\in[t_0,t]} \left\|\left[ R_n^{\epsi,\Lambda,\delta}(s), U^{\epsi,\Lambda,\delta}(s,t)\,O\, U^{\epsi,\Lambda,\delta}(t,s)\right]\right|\nonumber \\\nonumber
&\leq  C \, C_{R_n}\,   \frac{\epsi^{n+1}(1+\delta^{n+1})}{\epsi\delta} \,  \kappa \, \|O\|\, |X|^2  \,\zeta(\epsi^\gamma\,{\rm dist}(X,L^H)) \,|t-t_0| (1+(\delta\epsi)^{- d}|t-t_0| ^{d})\\
&\leq  C \, C_{R_n}\,   \frac{\epsi^{n+1}(1+\delta^{n+1})}{(\epsi\delta)^{d+1}} \,  \kappa\,  \|O\|\, |X|^2  \,\zeta(\epsi^\gamma\,{\rm dist}(X,L^H)) \,|t-t_0|  (1+ |t-t_0| ^{d})
\,,
\end{align}
where the second inequality follows from Lemma~\ref{lemma:commutator2} (due to the adiaba\-tic time scale we pick up a factor $(\epsi\delta)^{- d}$) and   set ${\rm dist}(X,L^H) := \min_{x \in X} {\rm dist}(x,L^H)$ (compare~\eqref{def:distxL}).
Substituting $\Phi_B^\Lambda (X)$ for~$O$ in \eqref{localEst} and abbreviating $\Delta   := |t-t_0|$, we obtain 
\begin{align}\label{comparerhoPi}
 \Big|  &  \tr    \left(( \rho^{\epsi,\Lambda}(t) -   \Pi^{\epsi,\Lambda}(t)  )  B \right) \Big| \;\leq\nonumber\\& \leq\;  
  \frac{\epsi^{n+1}(1+\delta^{n+1})}{(\epsi\delta)^{d+1}}  \kappa  C \sum_{X \subset \Lambda}   |X|^2\,\zeta(\epsi^\gamma\,{\rm dist}(X,L^H)) \, \|\Phi_B^\Lambda(X)\|\,\Delta   (1+ \Delta  ^{d})  
\nonumber \\
&\leq \;
  \frac{\epsi^{n+1}(1+\delta^{n+1})}{(\epsi\delta)^{d+1}}  \kappa  C   \sum_{x\in \Lambda} \sum_{X \subset \Lambda:\:x\in X}   |X|^2\,\zeta(\epsi^\gamma\,{\rm dist}(x,L^H)) \, \|\Phi_B^\Lambda(X)\|\,\Delta   (1+ \Delta  ^{d})\nonumber\\
&\leq \;
  \frac{\epsi^{n+1}(1+\delta^{n+1})}{(\epsi\delta)^{d+1}}  \kappa  C  \sum_{x\in\Lambda}   \zeta(\epsi^\gamma\,{\rm dist}(x,L^H)) \,  \sum_{y\in \Lambda} F_\zeta(d_{L_1}^\Lambda(x,y)) \,\times \nonumber\\
  &  \hspace{3.5cm}\sum_{X \subset \Lambda:\:x,y\in X}  \mbox{$\Lambda$-diam}(X)^{2d} \, \frac{\|\Phi_B^\Lambda(X)\|}{F_\zeta(d_{L_1}^\Lambda(x,y))}\,\Delta   (1+ \Delta  ^{d})\nonumber\\
&\leq\;  \frac{\epsi^{n+1}(1+\delta^{n+1})}{(\epsi\delta)^{d+1}}   \kappa  C  \|\Phi_B\|_{\zeta,2d,L_1} \,\|F\|_\Gamma \,\times \nonumber\\
  &  \hspace{4.5cm}\sum_{x\in\Lambda}   \zeta(\epsi^\gamma\,{\rm dist}(x,L^H))  \zeta({\rm dist}(x,L)) \,\Delta   (1+ \Delta  ^{d})
\nonumber\\
&\leq \;
  \frac{\epsi^{n+1- |\ell_H|\gamma}(1+\delta^{n+1})}{(\epsi\delta)^{d+1}} 
  \kappa  C  \|\Phi_B\|_{\zeta,2d,L_1} \,\|F\|_\Gamma M^{d-|\ell|-|\ell_H|}\,\Delta   (1+ \Delta  ^{d})
\,.
\end{align}
In the second inequality we used that summing over all sets $X$ for which $x$ minimises the distance to $L^H$ and then over all $x\in\Lambda$ would also include each term in the sum of the previous line at least once.
In the second-to-last inequality we used Lemma~\ref{lemma:dist}.
Hence
the
  statement \eqref{AdiState} of the theorem follows.     
\end{proof}

\subsection{Proof of the expansion of the NEASS}\label{proposition:expand:proof}

 \begin{proof}[Proof of Proposition~\ref{proposition:expand}.] Expanding $\Pi^{\epsi,\Lambda,\delta}$ in the trace yields
\begin{align}\label{PiExpandSA}\nonumber
  \tr  \left(  \Pi^{\epsi,\Lambda,\delta} \hspace{-3pt} \right.&\left.B^{\Lambda}\right) \;=\;   \tr  \left(  \E^{\I  \epsi S^{\epsi, \Lambda,\delta}}  P_*^{\Lambda} \E^{-\I  \epsi S^{\epsi, \Lambda,\delta}}  B^{\Lambda}\right)=   \tr  \left(   P_*^{\Lambda} \E^{-\I \epsi S^{\epsi, \Lambda,\delta}}  B^{\Lambda} \E^{\I  \epsi S^{\epsi, \Lambda,\delta}}\right)\\
&=\;  \tr  \left(   P_*^{\Lambda}  \left(\sum_{j=0}^k \frac{  \epsi ^j}{j!}   \ad_{S^{\epsi}}^j (B)^{\Lambda}
+ \frac{  1}{(k+1)!}  \E^{-\I \tilde\gamma S^{\epsi, \Lambda,\delta}}   \ad_{\epsi S^{\epsi}}^{k+1} (B)^{\Lambda} \E^{\I  \tilde \gamma S^{\epsi, \Lambda,\delta}} 
\right)
\right)
\end{align}
for some $\tilde\gamma\in [0,\epsi]$. 
Using Lemma~\ref{lemma:commutator1}, the argument that took us from \eqref{localEst} to \eqref{comparerhoPi}   gives
\begin{align*}
\left\|  \E^{-\I \tilde\gamma S^{\epsi, \Lambda,\delta}}   \ad_{\epsi S^{\epsi,\Lambda,\delta}}^{k+1} (B)^{\Lambda} \E^{\I  \tilde \gamma S^{\epsi, \Lambda,\delta}} \right\| & \;= \;\left\|
  \ad_{\epsi S^{\epsi,\Lambda,\delta}}^{k+1} (B)^{\Lambda} \right\|\\
  &\; \leq \; C (\epsi^{k+1} +(\epsi\delta)^{k+1})  \|\Phi_B\|_{\zeta,k+1,L}\,M^{d-|\ell|-|\ell_H|} \,.
\end{align*}
Hence the remainder in \eqref{PiExpandSA} is 
 of order $\epsi^{k+1}(1+\delta^{k+1}) M^{d-|\ell|-|\ell_H|}$ and we can truncate it. Note that in the last inequality above we used  the existence of a function $\tilde \zeta\in\mathcal{S}$ with $\mathcal{B}_{\zeta,k+1,L_H} \subset \mathcal{B}_{\tilde \zeta,k+1,L_H}$ such that $S (t) \in \mathcal{L}_{\tilde \zeta, k+1,L_H}$ (compare Lemma~A.1 (a) in \cite{MT}).  
 Expanding also the sum in the second line of \eqref{PiExpandSA} and collecting terms with the same power of $\epsi$ we find that
 \begin{align*}
   \tr  \left(   P_*^{\Lambda}    \E^{-\I \epsi S^{\epsi, \Lambda}}  B^{\Lambda} \E^{\I  \epsi S^{\epsi, \Lambda}}\right) 
 \,=:\,  & \tr  \left(  P_*^{\Lambda} \sum_{j=0}^k \epsi^j \mathcal{K}_j^\Lambda( B^{\Lambda} )\right)\\
 & \,+\,\Or(\epsi^{k+1} (1+\delta^{k+1})\|\Phi_B\|_{\zeta,k+1,L}\,M^{d-|\ell|-|\ell_H|}),
 \end{align*}
 where $\mathcal{K}_j^\Lambda(t): \mathcal{A}_\Lambda^\mathfrak{N} \to  \mathcal{A}_\Lambda^\mathfrak{N}$ are linear maps 
 given be nested commutators with various $A_\mu$'s. In particular,
 \[
 \mathcal{K}_0^\Lambda (B^\Lambda) = B^\Lambda\,,\quad \mathcal{K}_1^\Lambda(B^\Lambda)  = -\I [A^{\Lambda}_1,B^{\Lambda}]\,,
 \]
 and
 \[
 \mathcal{K}_2^\Lambda(B^\Lambda) =-\I [A^{\Lambda}_2,B^{\Lambda}] -\tfrac{1}{2}[A^\Lambda_1, [A^{\Lambda}_1,B^{\Lambda}]  ]\,.
 \]
 For the explicit expression of the linear term we compute
\[
 \tr(P_*   [A _1,B ])  = \tr([ P_*,A_1]B) =  \tr([ P_*,A_1^{\rm OD} ]B)  
\]
and note that 
if  $\sigma_*(t) =\{E_*(t)\}$ is a single eigenvalue,
then for any off-diagonal operator $O=O^{\rm OD}$ one has according to \eqref{Rcommu}
\[
\ad_{H_0}^{-1} (O) =\I\,[(H_0-E_*)^{-1} P_*^\perp, O ]=: \I\,[ R ,O]\,,
\]
and thus with $K^{\rm OD} = -\I [\dot P_*,P_*]$, that
\begin{eqnarray*}
A_1^{\rm OD}   &=&  \mathcal{I}_{H_0,g,\tilde g} ( - \delta K +  \tfrac{1}{\epsi}V_v+  H_{1})^{\rm OD} =  \ad_{H_0}^{-1} (  (- \delta K+\tfrac{1}{\epsi}V_v+  H_{1})^{\rm OD}) \\&=& [ R,  -   \delta [ \dot P_*,  P_*] +\I(\tfrac{1}{\epsi}V_v+  H_{1})^{\rm OD}]\,.
\end{eqnarray*}
 Hence,
\begin{eqnarray*}
 \tr(P_*   [A _1,B ]) &=&   -    \delta\,\tr([ P_*, [ R, [\dot P_*,P_*]]B) + \I \,\tr([ P_*, [ R,  \tfrac{1}{\epsi}V_v+  H_{1} ] ]B)
 \\ &=&  - \delta\,\tr(  P_* [( \dot P_* R + R\dot P_*),B]) - \I\,  \tr( P_* [    [ \tfrac{1}{\epsi}V_v+  H_{1},R],B])
 \,.
\end{eqnarray*}
We skip the longer but in principle similar computation of the explicit second order term. Note however, that the uniqueness of the expansion of $\Pi^{\epsi,\Lambda}_n$ implies  that for computing this expansion one can chose a more convenient series $A_\mu$ not involving the map $\mathcal{I}_{H_0,g,\tilde g}$. For example, the series $\tilde A_\mu$ obtained by choosing $\tilde A_\mu$ purely off-diagonal in each step of the construction leads to major simplifications in computing higher order terms.  While this leads to a different unitary $\E^{-\I \epsi \tilde S^\epsi}$, which presumably does not preserve quasi-locality, they result    in the same NEASS up to terms of order $\epsi^{n+1}$.      
\end{proof}

\appendix
\section*{Appendices}
In the following appendices we collect the various technical results used in the preceding sections. We will make use of the notation established in the previous sections, but for the sake of readability we   often drop the superscripts $\epsi$ and  $\Lambda$.

\section{Proof of Lemma~\ref{lemma:Vcomm}}\label{app:Vlemma}

Before giving the proof of Lemma~\ref{lemma:Vcomm}, we need to briefly discuss the Hilbert space structure of 
$\mathcal{A}_\Lambda$. 
Defining 
$
\langle A, B\rangle_{\mathcal{A}_\Lambda} := \tr A^* B
$
  turns the $4^{|\Lambda| \dimfib}$-dimensional vector space $\mathcal{A}_\Lambda$ into a Hilbert space. A convenient orthonormal basis is formed by the monomials. Let  $\mathfrak{M}_\Lambda:=\{ f : \Lambda\times\{1,\ldots,\dimfib\}\to \{0,1,2,3\} \}$ then
\[
M_f := n_f \prod_{x\in\Lambda} \prod_{j=1}^\dimfib m_{j,x,f(x)}  \quad\mbox{ with } \quad  \left\{\begin{array}{l} m_{j,x,0} = {\bf 1} \\m_{j,x,1} = {a_{j,x}}\\ m_{j,x,2} = {a^*_{j,x}}\\ m_{j,x,3} = {a^*_{j,x} a_{j,x} - a_{j,x}a^*_{j,x}}\,,\end{array}\right.
 \,,
\]
where the ordering with respect to different sites only contributes a sign and is supposed to be fixed in any convenient way. The normalisation factor $n_f$ is   specified below.   
The support of a monomial is the set supp$(M_f):= \{ x\in \Lambda \,|\, f(x,j) \not=0\mbox{ for all } j=1,\ldots,\dimfib \}$.
The $2^{|\Lambda|\dimfib}$ vectors of the form 
\[
\Psi_{X_1,\ldots, X_\dimfib} := \prod_{j=1}^\dimfib \prod_{x\in X_j\subset \Lambda}   a_{j,x}^* \Omega
\]
form an orthonormal basis of Fock space, where, again, the ordering with respect to different sites only contributes a sign that is left unspecified for the moment.  

It suffices to show that the monomials form   an orthonormal basis of Fock space $\mathfrak{F}_{\Lambda,\dimfib=1}$ for the case $\dimfib=1$, as for $\dimfib>1$ we are just dealing with a tensor product of copies of $\mathfrak{F}_{\Lambda,\dimfib=1}$, i.e.\
$\mathfrak{F}_{\Lambda,\dimfib}=\otimes_{j=1}^\dimfib \mathfrak{F}_{\Lambda,\dimfib=1}$.
For two monomials $M_f$ and $M_g$ we have
\[
\tr(M_f^* M_g) = \sum_{X\subset \Lambda} \langle   \Psi_X, M_f^* M_g \Psi_X\rangle\,.
\]
Assume that $f(x) \not= g(x)$ for some $x\in \Lambda$. Up to a sign, the following combinations of creation and annihilation operators on the site $x$ can appear in  $M_f^* M_g$,
\[
a_x\,, \, a_x^*\,, \, a_x a_x \,, \, a_x^* a_x^*\,,  \,a_x^* (a^*_x a_x - a_xa^*_x)\,,\, a_x(a^*_x a_x - a_xa^*_x)\,,(a^*_x a_x - a_xa^*_x)\,.
\]
If one of the first six cases appears, the inner product $ \langle   \Psi_X, M_f^* M_g \Psi_X\rangle$ is zero for any $\Psi_X$. In the last case, the summands with $\Psi_{X\cup\{x\}}$ and with $\Psi_{X\setminus\{x\}}$ cancel because they carry the opposite sign.

In the case   $f=g$,  the following combinations of creation and annihilation operators on any site $x\in\Lambda$ can appear in  $M_f^* M_f$,
\[
{\bf 1}\,, \, a^*_x a_x \,,\, a_x a^*_x \,,\,  (a^*_x a_x + a_xa^*_x) \,.
\]
Thus if $f(x) =  2$ for some $x\in X$ or $f(x) =  1$ for some $x\notin X$, then $ \langle   \Psi_X, M_f^* M_f \Psi_X\rangle=0$. Otherwise $ \langle   \Psi_X, M_f^* M_f \Psi_X\rangle=1$ and therefore the normalisation factor is
\[
n_f :=  2^{-|\{ x\in \Lambda\,|\, f(x)=0\mbox{ \footnotesize or } f(x) =3\}|/2}\,.
\]

 \begin{proof}[Proof of Lemma~\ref{lemma:Vcomm}.]
Because of the simple form of $V_v$, we can define  the interaction of the commutator $  [A, V_v]$ as
\[
\Phi^\Lambda_{ [A, V_v]}(Z) :=  \sum_{x\in\Lambda}\sum_{j=1}^\dimfib \,[ \Phi^\Lambda_{A }(Z ), v^{\epsi,\Lambda}( x) \,a^*_{j,x}a_{j,x}]\,.
\]
Next we expand $\Phi^\Lambda_{A }(Z )$ in the basis of monomials $(M_f)_{f\in\mathfrak{M}_Z^\mathfrak{N}}$ as
\[
\Phi^\Lambda_{A }(Z ) \,=\,\sum_{f  \in\mathfrak{M}_Z^\mathfrak{N}} c_f M_f\,.
\]
Since $a^*_{j,x}a_{j,x}$ has non-vanishing commutator only with $a_{j,x}$ and $a_{j,x}^*$, it holds that
\[
[M_f,  v^{\epsi,\Lambda}( x) \,a^*_{j,x}a_{j,x}] \;=\; \left\{ \begin{array}{cl}
v^{\epsi,\Lambda}( x) M_f & \mbox{ if $ f(x,j) =1$ }\\
- v^{\epsi,\Lambda}(  x) M_f & \mbox{ if $ f(x,j) =2$ }\\
0 & \mbox{ otherwise.}
\end{array}\right.
\]
Now $M_f\in\mathcal{A}_Z^\mathfrak{N}$ implies that for $Z_{f,1} := \{ (z,j)\in Z\times\{1,\ldots,\dimfib\}\,|\, f(z,j) =1 \}$ and  $Z_{f,2}:=\{  (z,j)\in Z\times\{1,\ldots,\dimfib\}\,|\, f(z,j) =2 \}$ it holds that
 $|Z_{f,1}|=|Z_{f,2}|$. Hence we find that
\begin{eqnarray*} \|\Phi^\Lambda_{ [A, V_v]}(Z)\|&=& 
\left\| \sum_{x\in Z} \sum_{j=1}^\dimfib \left[\sum_{f  \in\mathfrak{M}_Z^\mathfrak{N}} c_f M_f,  v^{\epsi,\Lambda}( x) \,a^*_{j,x}a_{j,x}\right] \right\|\\ &= & \left\| \sum_{f  \in\mathfrak{M}_\Lambda^\mathfrak{N}}\left( \sum_{(x,j)\in Z_{f,1}} v^{\epsi,\Lambda}( x) - \sum_{(x,j)\in Z_{f,2}} v^{\epsi,\Lambda}( x) \right) \,c_f\,  M_f \right\|\\
&\leq& \frac{|Z|\,\dimfib}{2} \left( \max_{x\in Z} v^{\epsi,\Lambda}( x) - \min_{x\in Z} v^{\epsi,\Lambda}( x) \right) \|\sum_{f  \in\mathfrak{M}_\Lambda^\mathfrak{N}} c_f M_f\|\\
&\leq & 
  \frac{\mbox{$\Lambda$-diam}(Z)^d\,\dimfib}{2} \;C_v\,\epsi \cdot\mbox{$\Lambda$-diam}(Z) \,   \|\Phi_A^\Lambda(Z)\|\,,
\end{eqnarray*}
where we used in the first inequality  that the monomials $M_f$ form an orthonormal basis.
Defining $\Phi_{A_{v}}^\Lambda := \frac{1}{\epsi}  \Phi^\Lambda_{ [A, V_v]}$, we obtain
\begin{eqnarray*} 
\sum_{Z\supset\{x,y\}} \mbox{$\Lambda$-diam}(Z)^k \frac{\| \Phi^\Lambda_{A_{v}}(Z)  \|}{F_{\zeta_k}(d_{L_\alpha}^\Lambda(x,y))}&=&
\sum_{Z\supset\{x,y\}}  \mbox{$\Lambda$-diam}(Z)^k \frac{\| \frac{1}{\epsi}\Phi^\Lambda_{ [A, V_v]}(Z)\|}{F_{\zeta_k}(d_{L_\alpha}^\Lambda(x,y))}\\
&\leq&
\frac{\dimfib\,C_v}{2}\,\sum_{Z\supset\{x,y\}}  \mbox{$\Lambda$-diam}(Z)^{k+d+1} \frac{\| \Phi^\Lambda_{A}(Z)\|}{F_{\zeta_k}(d_{L_\alpha}^\Lambda(x,y))}\\&\leq&
\frac{\dimfib}{2}\,C_v \,\| \Phi_{A}\|_{\zeta_k,k+d+1,L_\alpha}\,.  \hspace{3cm}    \qedhere
\end{eqnarray*}

\end{proof}

\section{Technicalities on quasi-local operators}\label{app:tech}

In this appendix we collect several results concerning local and quasi-local operators that were used repeatedly in the construction of the almost-stationary states and in the proof of the adiabatic theorem. They are all proven in Appendix~C of \cite{MT}, most of them based on similar lemmas in \cite{BDF}.

We start with a simple lemma that is at the basis of most arguments concerning localisation near $L$. It will not be explicitly used in the following, but it is important for checking that the claimed generalisations of the following results to the case of $L_\alpha$-localisation hold.
\begin{lemma}\label{lemma:dist} 
It holds that 
\[
\sum_{y\in\Lambda} F_\zeta (d^\Lambda_{L_\alpha}(x,y)) \leq \zeta\left( \alpha \,{\rm dist} (x,L)\right)\,\|F\|_\Gamma
\,,
\]
where
\[
\|F \|_\Gamma:=  \sup_{x\in\Gamma}\sum_{y\in\Gamma} F  (d (x,y))     <\infty\,.
\]
\end{lemma}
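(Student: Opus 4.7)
The plan is to decouple the $\zeta$-weight (which encodes the $L$-localization) from the polynomial part of $F_\zeta$, and then dominate the remaining $\Lambda$-sum by a $\Gamma$-sum. Concretely, since $d^\Lambda_L(x,y) = d^\Lambda(x,y) + {\rm dist}(x,L) + {\rm dist}(y,L) \geq {\rm dist}(x,L)$ and $\zeta\in\mathcal{S}$ is non-increasing, we have $\zeta(d^\Lambda_L(x,y)) \leq \zeta({\rm dist}(x,L))$. Combined with the elementary estimate $(1+d^\Lambda_L(x,y))^{d+1} \geq (1+d^\Lambda(x,y))^{d+1}$ for the denominator, this gives
\[
F_\zeta(d^\Lambda_L(x,y)) \;\leq\; \zeta({\rm dist}(x,L))\, F(d^\Lambda(x,y))
\]
pointwise in $y\in\Lambda$. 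Summing over $y$ reduces the first inequality to showing $\sum_{y\in\Lambda} F(d^\Lambda(x,y)) \leq \|F\|_\Gamma$.

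For this last step I would observe that, by the very definition of $\Mminus$ and $\lminus$, the assignment $y\mapsto y\lminus x$ is an injection of $\Lambda$ into $\Gamma$ (its range lies in $\{-M/2+1,\ldots,M/2\}$ in each closed direction and in $\{-M+1,\ldots,M-1\}$ in each open direction), and $d^\Lambda(x,y) = d(0, y\lminus x)$ by construction. Reindexing the sum by $z = y\lminus x$ and extending to all of $\Gamma$ (which is legitimate since $F\geq 0$) yields
\[
\sum_{y\in\Lambda} F(d^\Lambda(x,y)) \;\leq\; \sum_{z\in\Gamma} F(d(0,z)) \;=\; \|F\|_\Gamma,
\]
where the last equality uses translation invariance of $d$ on $\Gamma$.

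The $L^\epsi$-variant is handled by an identical argument, with $d^\Lambda_{L^\epsi}(x,y) \geq \epsi\,{\rm dist}(x,L)$ in place of $d^\Lambda_L(x,y) \geq {\rm dist}(x,L)$ and the same bound $d^\Lambda_{L^\epsi}(x,y) \geq d^\Lambda(x,y)$ on the polynomial factor. I do not expect any substantive obstacle: both reductions are purely structural, and the only point requiring any care is that the map $y\mapsto y\lminus x$ is injective, which is immediate from its definition.
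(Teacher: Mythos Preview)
Your argument is correct. The paper does not give its own proof of this lemma but defers to Lemma~5 in \cite{MT1}; your approach---bounding $\zeta(d^\Lambda_L(x,y))\leq\zeta({\rm dist}(x,L))$ by monotonicity, dominating the polynomial factor via $d^\Lambda_L(x,y)\geq d^\Lambda(x,y)$, and then controlling $\sum_{y\in\Lambda}F(d^\Lambda(x,y))$ by injecting $\Lambda$ into $\Gamma$ through $y\mapsto y\lminus x$---is the standard one and is almost certainly what is done there.
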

The   statement for the $d^\Lambda_{L_1}$-distance is proved in Lemma~C.1 in \cite{MT} and the   statement  for the $d^\Lambda_{L_\alpha}$-distance  follows by the same proof.
 
The next lemma shows that the norm of a quasi-local operator-family localised near $L$ grows at most like the volume of $L$.
\begin{lemma}\label{lemma:bound}
Let   $A\in \mathcal{L}_{\zeta,0,L_\alpha}$, then there is a constant $C_\zeta$ depending only on $\zeta$ such that
\[
   \|A^\Lambda\|\leq \alpha^{-|\ell|}\,M^{d-|\ell|}\,C_\zeta \, \|\Phi_A\|_{\zeta,0,L_\alpha} \|F\|_\Gamma
  \,.
\]
\end{lemma}
This follows by by the same proof as the one of Lemma~C.2 in \cite{MT} using Lemma~\ref{lemma:dist}.
We continue with a norm estimate on iterated commutators with quasi-local operator-families, where we use $\add_A(B):= [A,B]$ to denote the adjoint map.
\begin{lemma}\label{lemma:commutator1}
There is a constant $C_k$ depending only on $k\in\N$ such that
for any $A_1  \in \mathcal{L}_{\zeta,kd,L_\alpha}$, $A_2,\ldots,A_k \in \mathcal{L}_{\zeta,kd}$, $X \subset \Lambda$, and   $O\in\mathcal{A}^+_X$ it holds that
\[
\| \add_{A_k^\Lambda} \circ \cdots \circ \add_{A_1^\Lambda} (O)  \|\leq  C_k\,\|O \| \,|X|^k\,   \zeta\left(\alpha\,{\rm dist}(X,L) \right) \, \|\Phi_{A_1}\|_{\zeta,(k-j)d,L_\alpha}\,
\prod_{j=2}^k \|\Phi_{A_j}\|_{\zeta,(k-j)d} .
\]
\end{lemma}
This follows by by the same proof as the one of Lemma~C.3 in \cite{MT} using Lemma~\ref{lemma:dist}.
The next lemma shows that such an iterated commutator of quasi-local operator-families  is itself a quasi-local
  operator-family and that if one of them is $L_\alpha$-localised, then also the iterated commutator is. For the proof see  Lemma~C.4 in \cite{MT} and Lemma~4.5 (ii) in~\cite{BDF}.
\begin{lemma}\label{lemma:ads}
Let $n\in\N_0$, $k\in \N$, $A_0 \in \mathcal{L}_{\zeta,(n+k)d,L_\alpha}$, and $A_1,\ldots, A_k\in \mathcal{L}_{\zeta,(n+k)d}$. Then $\add_{A_k}\cdots\add_{A_1}(A_0) \in \mathcal{L}_{\zeta,nd,L_\alpha}$
and
\[
\| \Phi_{\add_{A_{k}}\cdots\add_{A_{1}}(A_0)}\|_{\zeta,nd,L_\alpha} \leq C_{k,n}\,   \|\Phi_{A_0}\|_{\zeta,(n+k)d,L_\alpha} 
\, \prod_{j=0}^k \|\Phi_{A_j}\|_{\zeta,(n+k)d} 
\]
with a constant $C_{k,n}$ depending only on $k$ and $n$.
In particular, for $A_0 \in \mathcal{L}_{\zeta,\infty,L_\alpha} $ and  $A_1,\ldots, A_k\in \mathcal{L}_{\zeta,\infty} $ also $\add_{A_k}\cdots\add_{A_1}(A_0) \in \mathcal{L}_{\zeta,\infty,L_\alpha}$. \end{lemma}

We also need to control the norm of commutators with time-evolved local operators. This is the content of the next lemma, which is Lemma~C.5 in \cite{MT}, see also Lemma~4.6 in \cite{BDF}.
\begin{lemma}\label{lemma:commutator2}
Let $H\in \mathcal{L}_{a,0}$ generate the dynamics $u^{ \Lambda}_{t,s}$  with Lieb--Robinson velocity $v:= \frac{1}{a} 2^{2d+2} \|F\|_\Gamma \|\Phi_H \|_{a,0}$. Then there exists a constant $C>0$ such that for any $O\in \mathcal{A}_X^+$ with $X\subset \Lambda$, for any $A\in \mathcal{L}_{\zeta,0,L_\alpha}$  and for any $t,s\in \R$   it holds that 
\[
\| [A,u^{ \Lambda}_{t,s}(O)]\| \leq C \|O\| \|\Phi_A\|_{\zeta,0,L_\alpha} |X|^2 \,\zeta(\alpha\,{\rm dist}(X,L_\alpha  )) \,(1+|t-s|^d)\,.
\]

 If $H\in \mathcal{L}_{\tilde\zeta,0}$ for some $\tilde \zeta\in\mathcal{S}$, one still has that for any $T>0$ there exists a constant $C$ such that
 \[
 \sup_{t,s\in [0,T]}\| [A,u^{ \Lambda}_{t,s}(O)]\| \leq C \|O\| \|\Phi_A\|_{\zeta,0,L_\alpha} |X|^2 \,\zeta(\alpha\,{\rm dist}(X,L_\alpha  ))  \,.
 \]
\end{lemma}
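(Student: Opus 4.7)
The plan is to reduce the estimate to the standard Lieb--Robinson bound for the dynamics generated by $H \in \mathcal{L}_{a,0}$ and then exploit the quasi-local $L$-localization of $A$ to produce the factor $\zeta({\rm dist}(X,L))$. First I would decompose $A^\Lambda = \sum_{Z \subset \Lambda} \Phi_A(Z)$ so that
\[
[A^\Lambda, u^\Lambda_{t,s}(O)] = \sum_{Z \subset \Lambda} [\Phi_A(Z), u^\Lambda_{t,s}(O)],
\]
and bound each commutator on the right by the NSY form of the Lieb--Robinson estimate, available because $H \in \mathcal{L}_{a,0}$ with velocity $v$ as defined in the statement:
\[
\|[\Phi_A(Z), u^\Lambda_{t,s}(O)]\| \leq C \,\|\Phi_A(Z)\|\,\|O\|\,|X| \sum_{z \in Z}\sum_{x\in X} \min\!\bigl(1,\,(e^{a v |t-s|}-1)\, F_a(d(x,z))\bigr).
\]
The two crucial analytic inputs are the integrability $\sum_{y\in\Gamma} F_a(d(x,y)) \leq \|F\|_\Gamma$ and the explicit control of the velocity $v$ given by the formula in the hypothesis.

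Second, I would interchange the order of summation and use the defining property of the $\mathcal{L}_{\zeta,0,L}$-norm,
\[
\sum_{Z \ni x,z} \|\Phi_A(Z)\| \leq \|\Phi_A\|_{\zeta,0,L}\, F_\zeta(d^\Lambda_L(x,z)),
\]
to collapse the sum over $Z$ into a sum over pairs $(x,z) \in X \times \Lambda$ against $F_\zeta(d^\Lambda_L(x,z))$. The key concluding input is Lemma~\ref{lemma:dist}, which shows that $\sum_{z\in\Lambda} F_\zeta(d^\Lambda_L(x,z)) \leq \zeta({\rm dist}(x,L))\,\|F\|_\Gamma$; that is, the $z$-summation already produces the factor $\zeta({\rm dist}(x,L))$ at no extra cost, since $\zeta$ is non-increasing and $d^\Lambda_L(x,z)\geq {\rm dist}(x,L)$. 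Minimizing over $x\in X$ yields $\zeta({\rm dist}(X,L))$; the remaining explicit $X$-dependence combines the summation over $x \in X$ (one factor of $|X|$) with the factor of $|X|$ from the Lieb--Robinson bound, giving $|X|^2$.

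The main obstacle is obtaining polynomial, rather than exponential, growth in $|t-s|$. This is handled by the standard light-cone splitting: for $Z$ with $d(Z,X) \leq v|t-s|$ one uses the trivial bound $\|[\Phi_A(Z), u^\Lambda_{t,s}(O)]\| \leq 2\|\Phi_A(Z)\|\,\|O\|$ at the cost of the light-cone volume $(1+v|t-s|)^d$; for $Z$ outside the light cone, the exponential decay of $F_a$ dominates the factor $e^{a v|t-s|}$ from the Lieb--Robinson kernel. Tuning the cutoff at $\sim v|t-s|$ produces the advertised $(1+|t-s|^d)$.

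For the second statement, with $H \in \mathcal{L}_{\tilde\zeta,0}$ for a general $\tilde\zeta\in\mathcal{S}$, the same decomposition works but with the Lieb--Robinson kernel replaced by its sub-exponential analogue, which no longer admits an explicit polynomial time bound. Accordingly one only retains a finite constant $C=C(T)$ on any compact interval $[0,T]$. The $L^\epsi$-localized versions follow by an identical argument, invoking the $L^\epsi$-version of Lemma~\ref{lemma:dist} and replacing ${\rm dist}(\cdot,L)$ by $\epsi\,{\rm dist}(\cdot,L)$ throughout.
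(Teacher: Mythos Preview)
The paper does not actually prove this lemma; it merely cites Lemma~9 in \cite{MT1} and Lemma~4.6 in \cite{BDF}. Your strategy---decompose $A$ into its interaction terms, apply the Lieb--Robinson bound termwise, feed in the $L$-localized interaction norm together with Lemma~\ref{lemma:dist}, and use the near/far light-cone splitting to convert the exponential time factor into a polynomial $(1+|t-s|^d)$---is exactly the standard argument carried out in those references, so your proposal is in line with what the paper defers to.

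One point to tighten: the displayed inequality $\sum_{Z\ni x,z}\|\Phi_A(Z)\|\leq \|\Phi_A\|_{\zeta,0,L}\,F_\zeta(d^\Lambda_L(x,z))$ requires \emph{both} points to lie in $Z$, whereas the $x$ you carry along from the Lieb--Robinson estimate is a point of $X$, not of $Z$. The way this is handled in \cite{MT1} is to use the single-point bound
\[
\sum_{Z\ni z}\|\Phi_A(Z)\|\;\leq\;\|\Phi_A\|_{\zeta,0,L}\,F_\zeta\bigl(d^\Lambda_L(z,z)\bigr)\;\leq\;C\,\|\Phi_A\|_{\zeta,0,L}\,\zeta\bigl({\rm dist}(z,L)\bigr),
\]
and then transfer the localization from $z$ to $x\in X$ via the logarithmic superadditivity $\zeta(r+s)\geq\zeta(r)\zeta(s)$, which together with the triangle inequality gives $\zeta({\rm dist}(z,L))\leq \zeta({\rm dist}(x,L))/\zeta(d(x,z))$. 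The leftover $1/\zeta(d(x,z))$ is harmlessly absorbed by the exponentially decaying Lieb--Robinson kernel $F_a(d(x,z))$ since $\zeta$ decays only super-polynomially. With this small adjustment your sketch matches the cited proofs.
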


The final lemma in this appendix shows that adjoining a quasi-local $L_\alpha$-localised operator-families with a unitary that is itself the exponential of a quasi-local operator-family  yields   a quasi-local and $L_\alpha$-localised operator-family, cf.\ Lemma~C.7 of \cite{MT}.
\begin{lemma}\label{lemma:transform}
Let $S\in \mathcal{L}_{\zeta,0}$ be self-adjoint and let $D \in \mathcal{L}_{\mathcal{S},\infty,L_\alpha}$, i.e.\ there is a sequence $(\tilde\zeta_n)_{n\in\N_0}$ in $\mathcal{S}$ such that $ \|\Phi_D\|_{\tilde \zeta_n,(n+1)d,L_\alpha}<\infty$. Then the family of operators
\[
\left\{A^\Lambda := \E^{-\I S^\Lambda} \,D^\Lambda\, \E^{\I S^\Lambda} \right\}_\Lambda
\]
defines an operator-family $A\in \mathcal{L}_{\mathcal{S},\infty,L_\alpha}$. More precisely, there is a constant $C_{\| \Phi_S\|_{\zeta,0 } }$ depending  on $\| \Phi_S\|_{\zeta,0 }$,   $\zeta$, $(\tilde\zeta_n)_{n\in\N_0}$, and $d$, and a sequence  $(\xi_n)_{n\in\N_0}$ in $ \mathcal{S}$,   such that
\[
\| \Phi_A \|_{\xi_n,n,L_\alpha} \leq C_{\| \Phi_S\|_{\zeta,0 } } \, \|\Phi_D\|_{\tilde \zeta_n,(n+1)d,L_\alpha}
\]
for all $n\in\N_0$. 
\end{lemma}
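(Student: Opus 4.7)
The plan is to build the interaction $\Phi_A$ of $A^\Lambda = \E^{-\I S^\Lambda} D^\Lambda \E^{\I S^\Lambda}$ directly, rather than by naively summing the (divergent) Hadamard series $\sum_{k\geq 0}\frac{(-\I)^k}{k!}\ad_{S^\Lambda}^k(D^\Lambda)$. First, using Duhamel's formula iteratively, I would write
\[
A^\Lambda = \sum_{k=0}^{N-1}\tfrac{(-\I)^k}{k!}\,\ad_{S^\Lambda}^k(D^\Lambda) + r_N^\Lambda,
\]
where $r_N^\Lambda$ is an $N$-fold nested time-ordered integral of $\E^{-\I t S^\Lambda}\ad_{S^\Lambda}^N(D^\Lambda)\E^{\I t S^\Lambda}$ over the simplex in $[0,1]^N$. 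Each iterated commutator $\ad_{S^\Lambda}^k(D^\Lambda)$ is the Hamiltonian of a quasi-local, $L$-localized interaction: this follows from Lemma~\ref{lemma:ads} after noting that it extends to the situation where only one of the arguments (here $D$) is $L$-localized and the others (here $S$, which is $L{=}0$-localized) are merely quasi-local. The mechanism is the same as that underlying Lemma~\ref{lemma:Vcomm}: the commutator $[S,D]$ inherits its localization near $L$ from $D$ because, whenever $\{x,y\}$ lies far from $L$, at least one of the sets contributing to the commutator must itself reach far from $L$, and the $L$-localized norm of $D$ suppresses such contributions.

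The central step is to construct $\Phi_A^\Lambda$ by telescoping. For each $Z\subset\Lambda$, let $\mathbb{E}_Z : \mathcal{A}_\Lambda \to \mathcal{A}_Z$ denote the normalised partial-trace conditional expectation, which is norm-contractive and satisfies $\|B-\mathbb{E}_Z(B)\|\leq 2\,\mathrm{dist}(B,\mathcal{A}_Z)$. I would then define
\[
\Phi_A^\Lambda(X) := \sum_{Z\subset X}(-1)^{|X\setminus Z|}\,\mathbb{E}_Z(A^\Lambda)\,\in\,\mathcal{A}_X
\]
by Möbius inversion, so that $\sum_{X\subset\Lambda}\Phi_A^\Lambda(X) = A^\Lambda$. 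The norm of $\Phi_A^\Lambda(X)$ is controlled by $\sup_{X'\subsetneq X}\mathrm{dist}(A^\Lambda,\mathcal{A}_{X'})$, which in turn is estimated by approximating $A^\Lambda$ with the above truncated Hadamard series for an order $N$ comparable to $\mathrm{diam}^\Lambda(X)$: the first $N$ commutators have norms and spatial spread controlled by Lemmas~\ref{lemma:commutator1} and~\ref{lemma:ads}, while the remainder $r_N^\Lambda$ is estimated through a Lieb--Robinson-type bound for $S$, as in Lemma~\ref{lemma:commutator2}, producing a super-polynomial decay in $\mathrm{diam}^\Lambda(X)$ thanks to the $\zeta\in\mathcal{S}$ and $\tilde\zeta_n\in\mathcal{S}$ decays of $\Phi_S$ and $\Phi_D$.

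The main obstacle is the combinatorial bookkeeping: for each $n$, one must exhibit a $\xi_n\in\mathcal{S}$ so that
\[
\sup_{x,y\in\Lambda} \sum_{X\supset\{x,y\}} \mathrm{diam}^\Lambda(X)^n\,\frac{\|\Phi_A^\Lambda(X)\|}{F_{\xi_n}(d^\Lambda_L(x,y))} \;\leq\; C_{\|\Phi_S\|_{\zeta,0}}\,\|\Phi_D\|_{\tilde\zeta_n,(n+1)d,L},
\]
which requires balancing the factorial growth of $\|\ad_{S^\Lambda}^k(D^\Lambda)\|$ in $k$ against the decay inherited from $\tilde\zeta_{n'}$ for suitably large $n'$ --- this is precisely why $D$ must be controlled in the stronger norm $\|\cdot\|_{\tilde\zeta_n,(n+1)d,L}$, the extra $d$ powers of the diameter covering the unavoidable cost of a single commutator. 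The closure properties of $\mathcal{S}$ (under products with polynomial weights and under composition with affine reparametrisations) ensure that such a $\xi_n$ exists. The $L^\epsi$-localized variant is proved verbatim using the $L^\epsi$-versions of Lemmas~\ref{lemma:dist}--\ref{lemma:commutator2}, since all quantitative estimates and the telescoping construction are insensitive to whether $d^\Lambda_L$ or $d^\Lambda_{L^\epsi}$ is in play.
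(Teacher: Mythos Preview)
The paper does not actually prove this lemma: it is stated in Appendix~B with a reference to Lemma~11 of \cite{MT1}, so there is no ``paper's own proof'' to compare against. That said, the architecture of the argument in \cite{MT1} (which follows \cite{BDF,BMNS}) is close in spirit to what you outline --- a Lieb--Robinson bound for the automorphism $\tau(\cdot)=\E^{-\I S}(\cdot)\E^{\I S}$ combined with a conditional-expectation decomposition --- but your execution of the second step has a genuine gap.

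The problem is the global M\"obius inversion applied to the \emph{full} operator $A^\Lambda$. Your claim that $\|\Phi_A^\Lambda(X)\|$ is controlled by $\sup_{X'\subsetneq X}\mathrm{dist}(A^\Lambda,\mathcal{A}_{X'})$ fails for two independent reasons. First, the alternating sum $\sum_{Z\subset X}(-1)^{|X\setminus Z|}\mathbb{E}_Z(A^\Lambda)$ has $2^{|X|}$ terms, and any direct estimate along the lines you indicate carries an unavoidable $2^{|X|}$ prefactor. Second, and more fundamentally, $A^\Lambda$ is an \emph{extensive} operator with $\|A^\Lambda\|$ of order $M^{d-|\ell|}$; its distance to $\mathcal{A}_{X'}$ for any proper $X'\subsetneq\Lambda$ is itself of this order, so the bound you propose is vacuous. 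The truncated Hadamard series does not help here: each $\ad_S^k(D)$ is again extensive, and approximating $A^\Lambda$ by a finite sum of extensive operators does not make it close to a local algebra.

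The standard route (and the one taken in \cite{MT1,BDF}) avoids both issues by working term-by-term: one writes $A^\Lambda=\sum_{Z}\tau(\Phi_D^\Lambda(Z))$ and decomposes each evolved local piece $\tau(\Phi_D^\Lambda(Z))$ separately, using the telescoping (``fattening'') decomposition
\[
\tau(\Phi_D^\Lambda(Z))=\sum_{n\geq 0}\Big(\mathbb{E}_{Z_n}-\mathbb{E}_{Z_{n-1}}\Big)\big(\tau(\Phi_D^\Lambda(Z))\big),\qquad Z_n:=\{x:d^\Lambda(x,Z)\leq n\},
\]
whose $n$th term is supported in $Z_n$ and has norm bounded by $2\,\mathrm{dist}(\tau(\Phi_D^\Lambda(Z)),\mathcal{A}_{Z_{n-1}})$. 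This distance \emph{is} small because $\Phi_D^\Lambda(Z)$ is strictly local and the Lieb--Robinson bound for the $\zeta$-decaying generator $S$ (the second part of Lemma~\ref{lemma:commutator2}) controls the leakage outside~$Z_{n-1}$. The new interaction $\Phi_A$ is then obtained by assigning the $n$th telescoping piece to $Z_n$; the sum over $Z$ and $n$ is precisely where the extra $d$ powers of the diameter in $\|\Phi_D\|_{\tilde\zeta_n,(n+1)d,L}$ are consumed. Note also that the Hadamard/Duhamel expansion you set up is not needed at all: the Lieb--Robinson bound applies directly to the automorphism $\tau$, viewed as the time-$1$ map of the flow generated by~$S$.
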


\section{Quasi-local inverse of the Liouvillian}\label{app:qli}

Let  $\Hi$ be a finite dimensional Hilbert space, $H \in \mathcal{L}(\Hi) =: \mathcal{A}$ be   self-adjoint, $\sigma_*\subset\sigma(H)$ a subset of eigenvalues of $H$ and $P_*$ the corresponding spectral projection.
     The inner product $\langle A,B\rangle := \tr A^* B$ turns the algebra  $\mathcal{A}$ into a Hilbert space that splits into the orthogonal sum $\mathcal{A} = \mathcal{A}^{\rm D}_{P_*} \oplus \mathcal{A}^{\rm OD}_{P_*}$ with respect to $P_*$, 
\[
A = (P_*AP_* + P_*^\perp AP_*^\perp) + (P_*^\perp AP_* + P_* AP_*^\perp)=: A^{\rm D}_{P_*} + A^{\rm OD}_{P_*}\,.
\] 
Then the map  
  \[
  \ad_H: \mathcal{A}\to\mathcal{A}\,,\quad B\mapsto \ad_H(B) :=-\I [H,B]
  \]
  is called the Liouvillian. It maps self-adjoint operators to self-adjoint operators and its restriction $\ad_H|_{\mathcal{A}^{\rm OD}_{P_*}}: \mathcal{A}^{\rm OD}_{P_*}\to \mathcal{A}^{\rm OD}_{P_*}$ to $P_*$-off-diagonal operators is an isomorphism. If $\sigma_*=\{E_*\}$ consists of a single eigenvalue, then its inverse is explicitly given by
  \begin{equation}\label{Rcommu}
  (\ad_H|_{\mathcal{A}^{\rm OD}_{P_*}})^{-1}: \mathcal{A}^{\rm OD}_{P_*}\to \mathcal{A}^{\rm OD}_{P_*}\,,\quad B\mapsto \I\,  \left[ (H-E_*)^{-1}P^\perp_*, B\right]\,,
  \end{equation}
  where $(H-E_*)^{-1}P^\perp_*$ is a bounded operator called the reduced resolvent. 
   See for example the discussion in Appendix~D of \cite{MT} for the simple proofs of all the claims  about $\ad_H|_{\mathcal{A}^{\rm OD}_{P_*}}$.

 In the context of the so-called quasi-adiabatic evolution (called spectral flow in \cite{BMNS}) in \cite{HW,BMNS} an extension $\mathcal{I}_H$ of $(\ad_H|_{\mathcal{A}^{\rm OD}_{P_*}})^{-1}$ to the full algebra $\mathcal{A}$ was constructed in   a way  that  preserves quasi-locality. 
To understand this quasi-local extension of the inverse of the Liouvillian,
first note that for any $\lambda  >0$ one can find a real-valued, odd function
  $\mathcal{W}_{\lambda }\in L^1(\R)$ satisfying 
  \[
  \sup\{|s|^n |\mathcal{W}_{\lambda }(s)|\,|\, |s|>1\}<\infty\qquad\mbox{for all $n\in\N$ },
  \]
   and with   Fourier transform $\widehat{\mathcal{W}}_{\lambda } \in C^\infty(\R)$ satisfying
  \[
\widehat{\mathcal{W}}_{\lambda }(\omega) = \frac{-\I}{\sqrt{2\pi}\omega} \quad\mbox{ for } |\omega|\geq \lambda\qquad\mbox{ and }\qquad   \widehat{\mathcal{W}}_{\lambda }(0)=0\,.
\]
 An explicit function $\mathcal{W}_{\lambda }$ having all these properties  is constructed in \cite{BMNS}. 
 We need a slightly modified version $\mathcal{W}_{\lambda,\tilde \lambda}$ of this function: Let $\lambda>\tilde \lambda >0$ and $\chi_{\lambda,\tilde \lambda}\in C^\infty(\R)$ a real valued  function with $\chi_{\lambda,\tilde \lambda}(\omega) = 0$ for $\omega \in [-\tilde \lambda,\tilde \lambda]$ and $\chi_{\lambda,\tilde \lambda}(\omega) = 1$ for $|\omega|\geq \lambda$. Then $\mathcal{W}_{\lambda,\tilde \lambda}$ defined through its Fourier transform $\widehat{\mathcal{W}}_{\lambda,\tilde \lambda}:=  \chi_{\lambda,\tilde \lambda}\,\widehat{\mathcal{W}}_{\lambda }$ 
 satisfies, in addition to the properties mentioned above for $\mathcal{W}_{\lambda }$, also  $ \widehat{\mathcal{W}}_{\lambda,\tilde \lambda}(\omega)=0$ for all $\omega\in [-\tilde \lambda,\tilde \lambda]$.

\begin{lemma}\label{lemma:I1}
Assume $H$, $\sigma_*$, and $P_*$ as above and let     $g:= {\rm dist}( \sigma_*, \sigma(H)\setminus\sigma_*) > 0$. Then for any $g>\tilde g>0$ the map 
\[
\mathcal{I}_{H,g,\tilde g} : \mathcal{A}\to \mathcal{A}\,, 
\quad \mathcal{I}_{H,g,\tilde g}(A) := \int_\R  \mathcal{W}_{g,\tilde g}(s)  
\,\E^{\I Hs}\,A\,\E^{-\I Hs} \,\D s
\]
satisfies 
\[
\mathcal{I}_{H,g,\tilde g}|_{\mathcal{A}^{\rm OD}_{P_*}} =   \ad_H|_{\mathcal{A}^{\rm OD}_{P_*}}^{-1}
\]
and, if diam$(\sigma_*)\leq \tilde g$,
\[
P_*\,\mathcal{I}_{H,g,\tilde g}(A)\,P_* = 0 \quad\mbox{ for all } A\in \mathcal{A}\,.
\]
\end{lemma}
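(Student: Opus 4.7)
The plan is to diagonalize $H$ and reduce everything to a direct computation in the eigenbasis, since all the objects involved are finite-dimensional matrices. Write $H = \sum_\alpha E_\alpha\,|\alpha\rangle\langle\alpha|$ with orthonormal eigenvectors, expand $A \in \mathcal{A}$ as $A = \sum_{\alpha,\beta} A_{\alpha\beta}\,|\alpha\rangle\langle\beta|$, and compute
\[
\E^{\I Hs} A\, \E^{-\I Hs} = \sum_{\alpha,\beta} A_{\alpha\beta}\,\E^{\I(E_\alpha - E_\beta)s}\,|\alpha\rangle\langle\beta|.
\]
Substituting into the definition of $\mathcal{I}_{H,g,\tilde g}(A)$ and exchanging sum and integral (everything is finite-dimensional, so this is immediate) yields
\[
\mathcal{I}_{H,g,\tilde g}(A)_{\alpha\beta} = A_{\alpha\beta} \int_\R \mathcal{W}_{g,\tilde g}(s)\,\E^{\I(E_\alpha - E_\beta)s}\,\D s = \sqrt{2\pi}\,A_{\alpha\beta}\,\widehat{\mathcal{W}}_{g,\tilde g}(E_\beta - E_\alpha),
\]
using the convention for $\widehat{\mathcal{W}}_{g,\tilde g}$ stated at the beginning of the appendix.

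Next I use the two defining properties of $\widehat{\mathcal{W}}_{g,\tilde g}$. For the off-diagonal block, if $|\alpha\rangle$ and $|\beta\rangle$ lie in different spectral subspaces (one in $\mathrm{ran}\,P_*$, the other in $\mathrm{ran}\,P_*^\perp$), then $|E_\alpha - E_\beta| \geq g$ by the gap assumption, so $\widehat{\mathcal{W}}_{g,\tilde g}(E_\beta - E_\alpha) = \frac{-\I}{\sqrt{2\pi}\,(E_\beta - E_\alpha)}$. Hence
\[
\mathcal{I}_{H,g,\tilde g}(A)_{\alpha\beta} = \frac{\I\,A_{\alpha\beta}}{E_\alpha - E_\beta}.
\]
On the other hand, $\ad_H(A)_{\alpha\beta} = (E_\alpha - E_\beta) A_{\alpha\beta}$, so on $\mathcal{A}^{\rm OD}_{P_*}$ the operator $\ad_H$ is invertible with $\ad_H^{-1}(A)_{\alpha\beta} = A_{\alpha\beta}/(E_\alpha - E_\beta)$. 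Comparing, $\mathcal{I}_{H,g,\tilde g}|_{\mathcal{A}^{\rm OD}_{P_*}} = \I\,\ad_H^{-1}|_{\mathcal{A}^{\rm OD}_{P_*}}$, which is the first claim.

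For the second claim, note that the matrix entries of $P_* \mathcal{I}_{H,g,\tilde g}(A) P_*$ involve only indices $\alpha,\beta$ with $E_\alpha, E_\beta \in \sigma_*$, in which case $|E_\alpha - E_\beta| \leq \mathrm{diam}(\sigma_*) \leq \tilde g$ by hypothesis. By construction $\widehat{\mathcal{W}}_{g,\tilde g}$ vanishes identically on $[-\tilde g, \tilde g]$, so $\widehat{\mathcal{W}}_{g,\tilde g}(E_\beta - E_\alpha) = 0$ for all such pairs and $P_*\,\mathcal{I}_{H,g,\tilde g}(A)\,P_* = 0$. There is essentially no obstacle here beyond being careful with Fourier conventions; the only substantive input is that the cutoff function $\chi_{g,\tilde g}$ has been chosen precisely so that $\widehat{\mathcal{W}}_{g,\tilde g}$ agrees with $-\I/(\sqrt{2\pi}\omega)$ outside the gap and vanishes inside the spectral patch, which is exactly what is needed to match the two claims.
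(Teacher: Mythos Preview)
Your proof is correct and follows essentially the same approach as the paper's: both insert the spectral decomposition of $H$ into the definition of $\mathcal{I}_{H,g,\tilde g}$, identify the resulting integral as $\sqrt{2\pi}\,\widehat{\mathcal{W}}_{g,\tilde g}$ evaluated at eigenvalue differences, and then read off the two claims from the explicit form of $\widehat{\mathcal{W}}_{g,\tilde g}$ outside and inside the gap. The only cosmetic difference is that you work with an orthonormal eigenbasis $|\alpha\rangle$ while the paper uses the spectral projections $P_n$.
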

\begin{proof}
All claims follow immediately by inserting the spectral 
decomposition of $H=\sum_n E_n P_n$  into the definition of~$\mathcal{I}_H$,
\[
\mathcal{I}_{H,g,\tilde g}(A) = \sqrt{2\pi}\,\sum_{n,m} \widehat{ \mathcal{W}}_{g,\tilde g}(E_m-E_n)  P_n\,A\,P_m\,,
\]
and using that for $E_n\in \sigma_*$ and $E_m \in \sigma(H)\setminus \sigma_*$ 
it holds that $|E_m-E_n|\geq g$, i.e.\ $ \widehat{ \mathcal{W}}_{g,\tilde g}(E_m-E_n) = \frac{-\I}{\sqrt{2\pi}(E_m-E_n)}$, 
and that for $E_n,E_m \in \sigma_*$ it holds that 
$|E_m-E_n|\leq  \tilde g$, i.e.\ $ \widehat{ \mathcal{W}}_{g,\tilde g}(E_m-E_n) =0$.     
\end{proof}
 
The  crucial advantage of the map $\mathcal{I}_H$    is that if $H$ admits Lieb-Robinson bounds, then $\mathcal{I}_H$   maps quasi-local operator-families to quasi-local operator-families, 
an observation originating from \cite{HW} and worked out in detail in \cite{BMNS}. To cover also slowly varying potentials $V_v$ that are not contained in  any of the spaces $\mathcal{L}_{\zeta ,k,L^H_{\epsi^\gamma}}$, we need to slightly extend these results.

\begin{lemma}\label{lemma:I2}
Let $H\in \mathcal{L}_{a,0}$ and let $D=\{D^{\epsi,\Lambda}\}$ be an operator-family such that 
$[H,D] \in \mathcal{L}_{\mathcal{S},\infty,L^H_{\epsi^\gamma}}$ for some $L^H\in\loc$ and $\gamma\in\{0,1\}$.  By Lemma~\ref{lemma:ads} this is the case, in particular, if 
$D\in \mathcal{L}_{\mathcal{S},\infty,L^H_{\epsi^\gamma}}$.
Then  
\[
\mathcal{I}_{H,g,\tilde g}(D)^{\epsi,\Lambda}:= \big\{  \mathcal{I}_{H^\Lambda,g,\tilde g}(D^{\epsi,\Lambda})  \big\}_{\Lambda} 
\]
defines an operator-family $\mathcal{I}_{H,g,\tilde g}(D)\in \mathcal{L}_{\mathcal{S},\infty,L_H^\epsi}$. 
\end{lemma}
\begin{proof}
This statement is a slight generalisation of Theorem~4.8 in \cite{BMNS} or Lemma~4.8 in \cite{BDF}, where $D\in \mathcal{L}_{\mathcal{S},\infty,L^H_{\epsi^\gamma}}$ is required. Since the proof is quite subtle and lengthy, we merely explain the small change due to the relaxed assumption on $D$.

By definition of $\mathcal{I}_{H,g,\tilde g}$ and the fact that $\mathcal{W}_{g,\tilde g} \in L^1(\R)$ is odd, we have that
\begin{eqnarray*}
\mathcal{I}_{H,g,\tilde g} (  D) &= & \int_\R \mathcal{W}_{g,\tilde g} (s) \,\E^{\I H s}\, D\,\E^{-\I H_s} \,\D s =  \int_\R \mathcal{W}_{g,\tilde g} (s) \left( \E^{\I H s}\, D\,\E^{-\I H s}  -D\right) \,\D s\\
&=&  \I \int_\R \mathcal{W}_{g,\tilde g} (s) \int_0^s \E^{\I H u}\, \left[ H, D\right]\,\E^{-\I Hu}  \D u\,\D s
\end{eqnarray*}
with $[H,D] \in \mathcal{L}_{\mathcal{S},\infty,L^H_{\epsi^\gamma}}$ by assumption. Now one proceeds exactly as in \cite{BMNS}, where the additional integration in the variable $u$ does not change the arguments at all, since, on the one hand, for the part of the integral where $|s|\leq T$   the bound $\int_0^s \E^{cu}\D u = \tfrac{1}{c}(\E^{cs}-1)$ is as good as $\E^{cs}$ itself and, on the other hand, for the part of the integral where $|s|\geq T$  one uses that also $|s|\mathcal{W}_{g,\tilde g}(s) $ decays faster than any polynomial.     
\end{proof}

 Combining Lemma~\ref{lemma:I1} and Lemma~\ref{lemma:I2}, we obtain the following corollary.
\begin{corollary}\label{diagcor}
Assume    (A1)$_{\timedom,L_H }$  for $H_0$  and let $D \in \mathcal{L}_{\mathcal{S},\infty,L^H_{\epsi^\gamma}}$.
Then
\[
\tilde D := D-    \mathcal{I}_{H_0(t),g,\tilde g}( \ad_{H_0(t)}(D)) 
\]
satisfies $\tilde D \in\mathcal{L}_{\mathcal{S},\infty,L^H_{\epsi^\gamma}}$ and
\[
P_*(t) \tilde D P_*(t) = P_*(t)  D P_*(t) \,,\quad P_*(t) \tilde D P_*(t)^\perp = P_*(t)^\perp \tilde D P_*(t)  =0\,.
\]

\end{corollary}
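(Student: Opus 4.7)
The plan is straightforward given the machinery assembled in Appendix~\ref{app:qli}. For the quasi-locality claim, I would first observe that $\ad_{H_0(t)}(D) \in \mathcal{L}_{\mathcal{S},\infty,L_H^\epsi}$: this follows from Lemma~\ref{lemma:ads} applied to $H_0(t) \in \mathcal{L}_{\mathcal{E},\infty}$ and $D \in \mathcal{L}_{\mathcal{S},\infty,L_H^\epsi}$. Then Lemma~\ref{lemma:I2}, applied to $\ad_{H_0(t)}(D)$ as the input operator, yields $\mathcal{I}_{H_0(t),g,\tilde g}(\ad_{H_0(t)}(D)) \in \mathcal{L}_{\mathcal{S},\infty,L_H^\epsi}$. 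Since this space is a vector space, the sum $\tilde D$ lies in it as well.

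For the block structure I would use the spectral decomposition $H_0(t) = \sum_n E_n P_n$, in terms of which
\[
\mathcal{I}_{H_0(t),g,\tilde g}(A) \;=\; \sqrt{2\pi}\,\sum_{n,m} \widehat{\mathcal{W}}_{g,\tilde g}(E_m - E_n)\, P_n A P_m\,,
\]
as was recalled in the proof of Lemma~\ref{lemma:I1}. The identity $P_*\tilde D P_* = P_* D P_*$ reduces to showing $P_* \mathcal{I}_{H_0(t),g,\tilde g}(\ad_{H_0(t)}(D)) P_* = 0$, which is immediate from the second assertion of Lemma~\ref{lemma:I1} (valid because $\mathrm{diam}(\sigma_*(t)) \leq f_+^\Lambda(t) - f_-^\Lambda(t) \leq \tilde g$).

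For the off-diagonal blocks, I would combine the spectral expression above with the elementary identity $P_n \ad_{H_0(t)}(D) P_m = (E_n - E_m) P_n D P_m$. Taking $E_n \in \sigma_*(t)$ and $E_m \notin \sigma_*(t)$, one has $|E_m - E_n| \geq g$, so $\widehat{\mathcal{W}}_{g,\tilde g}(E_m - E_n) = \tfrac{-\I}{\sqrt{2\pi}(E_m - E_n)}$, and the $(E_n - E_m)$ from the commutator cancels this factor, yielding
\[
P_*\,\mathcal{I}_{H_0(t),g,\tilde g}(\ad_{H_0(t)}(D))\,P_*^\perp \;=\; \I\, P_* D P_*^\perp\,.
\]
Thus $P_*\tilde D P_*^\perp = P_* D P_*^\perp + \I \cdot \I\, P_* D P_*^\perp = 0$, and the $P_*^\perp(\cdots)P_*$-block is handled by exchanging $n \leftrightarrow m$ and using that $\widehat{\mathcal{W}}_{g,\tilde g}$ is odd.

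The proof really consists of this spectral bookkeeping; no new estimates are needed, since all the quasi-locality follows from the existing lemmas. The one subtlety I would emphasize is that the width condition $\mathrm{diam}(\sigma_*(t)) \leq \tilde g$ built into Assumption (A1)$_{\timedom,L_H}$ is exactly what is needed to make $\widehat{\mathcal{W}}_{g,\tilde g}$ vanish on eigenvalue pairs both lying in $\sigma_*(t)$, so that the correction $\I\,\mathcal{I}_{H_0(t),g,\tilde g}(\ad_{H_0(t)}(D))$ kills the off-diagonal blocks of $D$ without disturbing its $P_*(\cdots)P_*$-block.
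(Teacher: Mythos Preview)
Your proposal is correct and follows exactly the approach the paper intends: the corollary is stated without proof precisely because it is an immediate consequence of Lemmas~\ref{lemma:I1} and~\ref{lemma:I2} (together with Lemma~\ref{lemma:ads} for the commutator), and your spectral bookkeeping spells this out accurately. One could shorten the off-diagonal argument slightly by invoking the first assertion of Lemma~\ref{lemma:I1} directly, noting that $\ad_{H_0}$ and $\mathcal{I}_{H_0,g,\tilde g}$ both preserve the diagonal/off-diagonal splitting so that $(\mathcal{I}_{H_0,g,\tilde g}(\ad_{H_0}(D)))^{\rm OD} = \I\,(\ad_{H_0}|_{\mathcal{A}^{\rm OD}})^{-1}\ad_{H_0}(D^{\rm OD}) = \I\,D^{\rm OD}$, but your explicit computation via the spectral expansion is equivalent.
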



\begin{thebibliography}{00}

 
\bibitem{BBDF}
S.\ Bachmann, A.\ Bols, W.\ De Roeck, and M.\ Fraas:
Quantization of conductance in gapped interacting systems.
Annales Henri Poincar{\'e} 19:695--708 (2018). 




\bibitem{BDF}
S.\ Bachmann, W.\ De Roeck, and M.\ Fraas:
The adiabatic theorem and linear response theory for extended quantum systems.
Communications in Mathematical Physics  361:997--1027 (2018).


 
\bibitem{BMNS}
S.\ Bachmann, S.\ Michalakis, B.\ Nachtergaele, and R.\ Sims:  Automorphic equivalence within gapped phases of quantum lattice systems. Communications in Mathematical Physics 309:835--871 (2012).

\bibitem{BES}
J.\ Bellissard, A.\ van Elst, and H.\ Schulz-Baldes:  The noncommutative geometry of the quantum Hall effect.  Journal of Mathematical Physics 35:5373--5451 (1994).

\bibitem{BGKS}
J.\ Bouclet,  F.\ Germinet, A.\ Klein, and J.\ Schenker: Linear response theory for magnetic Schr\"odinger operators in disordered media. Journal of Functional Analysis 226:301--372 (2005).

\bibitem{BD}
J.-B.\ Bru and W.\ de Siqueira Pedra:
Lieb--Robinson Bounds for Multi-Commutators and Applications to Response Theory. 
Springer Briefs in Mathematical Physics Vol.\ 13, Springer  (2016).


\bibitem{DL}
G.\ De Nittis and M.\ Lein:
  Linear Response Theory: An Analytic-Algebraic Approach.
  Springer Briefs in Mathematical Physics Vol.\ 21, Springer  (2017).
 
 
 \bibitem{DS}
W.\ de Roeck and M.\  Salmhofer: Persistence of exponential decay and spectral gaps for interacting fermions.  
 Communications in Mathematical Physics, Online First (2018).


\bibitem{ES}
A.\ Elgart and B.\ Schlein: Adiabatic charge transport and the Kubo formula for Landau-type Hamiltonians.
Communications on Pure and Applied Mathematics 57, 590--615 (2004). 

\bibitem{G}
G.M.\ Graf: Aspects of the integer quantum Hall effect. In Proceedings of Symposia in Pure Mathematics 76: 429,  American Mathematical Society (2007).


\bibitem{H}
M.\ Hastings: The Stability of Free Fermi Hamiltonians. Preprint available at 
\href{http://arxiv.org/abs/1706.02270}{\texttt{arXiv:1706.02270}} (2017).


\bibitem{HW}
M.\ Hastings  and  X.-G.\ Wen:
Quasiadiabatic continuation of quantum states: The stability of topological ground-state degeneracy and emergent gauge invariance.
 Physical Review B 72:045141 (2005).

\bibitem{K}
T.\ Kato: On the convergence of the perturbation method.\ I.
Progress of Theoretical Physics 4:514--523 (1949). 

\bibitem{KLM}
A.\ Klein, O.\ Lenoble, and P.\ M\"uller:
On Mott's formula for the ac-conductivity in the Anderson model,
Annals of Mathematics 549--577 (2007).

\bibitem{Ku}
R.\ Kubo:
Statistical-mechanical theory of irreversible processes. I. General theory and simple applications to magnetic and conduction problems,  J.\ Phys.\ Soc.\ Japan 12:570--586 (1957).


\bibitem{L}
R.\ Laughlin:  Anomalous Quantum Hall Effect: An Incompressible Quantum Fluid with Fractionally Charged Excitations, Physical Review Letters 50:1395--1398 (1983).
%


  \bibitem{LR}
E.\  Lieb and D.\ Robinson:
 The finite group velocity of quantum spin systems.
  Communications in Mathematical Physics 28:251--257 (1972).

\bibitem{MT}
D.\ Monaco and S.\ Teufel: Adiabatic currents for interacting fermions on a lattice. 
Reviews in Mathematical Physics 31:1950009 (2019).



\bibitem{NSY} 
B.\ Nachtergaele,  R.\ Sims, and A.\ Young:
Lieb--Robinson bounds, the spectral flow, and stability of the spectral gap for lattice fermion systems. 
 Mathematical Problems in Quantum Physics 117:93 (2018).

    
\bibitem{N}   
G.\ Nenciu:
On asymptotic perturbation theory for quantum mechanics: almost invariant subspaces and gauge invariant magnetic perturbation theory.
Journal of Mathematical Physics 43:1273--1298 (2002).  
  

\bibitem{PST1}
G.\ Panati, H.\ Spohn, and S.\ Teufel: Space-adiabatic perturbation theory in quantum dynamics. Physical Review Letters 88:250405 (2002).

\bibitem{PST2}
G.\ Panati, H.\ Spohn, and S.\ Teufel: Space-adiabatic perturbation theory. Advances in Theoretical and Mathematical Physics 7:145--204 (2003).


\bibitem{PST3}
G.\ Panati, H.\ Spohn, and S.\ Teufel: 
Effective dynamics for Bloch electrons: Peierls substitution and beyond. Communications in Mathematical Physics 242:547--578 (2003).
 


\bibitem{S}  B.\ Simon. Fifteen problems in mathematical physics.
  Perspectives in mathematics, Birkh\"auser, Basel 423 (1984).
  
 
  
   
\bibitem{T}
S.\ Teufel:
Adiabatic Perturbation Theory in Quantum Dynamics.
Lecture Notes in Mathematics 1821,   Springer    (2003).

\end{thebibliography}
\end{document}